\documentclass[11pt,a4paper]{article}

\usepackage[utf8]{inputenc}
\usepackage[T1]{fontenc}

\usepackage{amsmath,amssymb,amsthm}
\usepackage{algorithm}
\usepackage{algpseudocode}
\usepackage{hyperref}
\usepackage{tikz}
\usepackage{subcaption}
\usetikzlibrary{calc}

\usepackage[a4paper,margin=1in]{geometry}

\usepackage{xspace}
\newcommand{\BID}{\textsc{BID}\xspace}

\theoremstyle{plain}
\newtheorem{theorem}{Theorem}[section]
\newtheorem{corollary}{Corollary}
\newtheorem{lemma}[theorem]{Lemma}
\newtheorem{claim}[theorem]{Claim}
\newtheorem{quotedtheorem}{Quoted Theorem}

\newenvironment{quotedproof}[1][Proof (from cited source)]
{
  \par\noindent\textit{#1. }
}
{
  \hfill$\square$\par
}

\theoremstyle{definition}
\newtheorem{definition}[theorem]{Definition}

\title{On the Instance Optimality of Bidirectional Dijkstra's Algorithm}

\author{
Matic Požar\\
UP FAMNIT, University of Primorska\\
Koper, Slovenia\\
\texttt{matic.pozar@upr.si}
}

\date{}

\begin{document}

\maketitle

\begin{abstract}
Recent work by Haeupler, Hladík, Rozhon, Tarjan, and Tětek on the instance optimality of shortest-path algorithms established several results concerning Dijkstra's algorithm and bidirectional Dijkstra's algorithm in weighted and unweighted graphs. Motivated by these results, we revisit the question of instance optimality for shortest $st$-path algorithms in the standard query model.

We identify several issues in the analysis of the instance optimality of both unidirectional and bidirectional Dijkstra's algorithms and provide corresponding counterexamples. We then propose a minimal simple modification of the bidirectional Dijkstra algorithm and prove that the resulting variant is instance optimal in the weighted setting. Furthermore, we revisit the unweighted case, provide a simplified proof of the lower bound showing that no algorithm can achieve instance optimality up to a factor better than $O(\Delta)$,  where $\Delta$ denotes the maximum degree of the graph, and discuss the implications of this result for approximation algorithms. Finally, we make progress on the open problem of instance optimality in simple graphs. We show that if the problem instance satisfies $n\ge m/16$, where $n$ is the number of nodes and $m$ is the number of edges queried by our algorithm, then it is optimal up to a constant factor. Additionally, we show instance optimality for a broad class of instances, in particular when the largest degree in the graph is at most the square root of the number of explored edges, our algorithm exhibits optimality up to a constant factor.
\end{abstract}

\section{Introduction}

The shortest $st$-path problem asks for a path of minimum length from a given node $s$ to a target node $t$ in a weighted graph $G$. Negative edge weights introduce significant challenges when designing shortest-path algorithms, as they can invalidate greedy strategies. The Bellman--Ford algorithm \cite{bellman1958, ford1956} solves the problem in the absence of negative cycles in $O(|V|\cdot|E|)$ time. For graphs with non-negative edge weights, Dijkstra's algorithm \cite{dijkstra1959} provides the canonical solution. It employs a greedy approach, using a priority queue to repeatedly select the node with the smallest distance computed so far. An implementation using Fibonacci heaps \cite{dijkstra_Fibonacci} achieves a time complexity of $O(|E| + |V|\log|V|)$. Moreover, it has been shown that among all algorithms based solely on comparisons and addition, Dijkstra's algorithm is asymptotically optimal \cite{dijkstra_oprimal_1, dijkstra_oprimal_2}.

Algorithms are traditionally evaluated according to worst-case or average-case guarantees. A considerably stronger notion is that of instance optimality \cite{intance_optimality, original_instance_optimality}. An algorithm $A$ is instance optimal with respect to a complexity measure $T$ if, for any given input $x$, no other correct algorithm $A'$ can, in expectation, solve $x$ using fewer queries than $A$, up to a constant factor. Here we will focus on instance optimality in the context of shortest $st$-path problems, however, the literature in this domain is vast and includes results in sorting with partial information \cite{optimal_sorting}, finding the convex hull \cite{convex_hull}, and ordering vertices by distance from a source vertex \cite{universally_optimal_dijkstra, simpleruniversallyoptimaldijkstra}.

Any correct algorithm for computing the shortest $st$-path must satisfy two requirements. First, it must identify a shortest $st$-path. Second, it must be able to certify, with sufficiently high confidence, that the path returned is indeed optimal. While classical algorithms such as Dijkstra's algorithm and breadth-first search satisfy both requirements, they are largely agnostic to the particular structure of the input graph. Consequently, they do not exploit special topological features that may allow substantially faster solutions on specific instances. This observation suggests a possible obstacle to instance optimality: an algorithm tailored to a particular graph family may be able to identify a shortest path using significantly fewer queries, provided it can still certify that no shorter path exists.

The central theme of the first part of this paper is that the difficulty of the shortest $st$-path problem lies not only in finding a candidate path, but also in proving that the candidate is optimal. Our counterexamples exploit precisely this distinction. We construct algorithms that use inexpensive structural shortcuts to identify a shortest path and then employ lower bounds on the shortest-path length to certify optimality. In the case of unidirectional Dijkstra's algorithm, this leads to improvements by a factor of $O(\Delta)$, where $\Delta$ denotes the maximum degree of the graph. On the other hand, we show that a slight modification of bidirectional Dijkstra's algorithm effectively performs the additional work necessary to establish such lower bounds, thereby preventing the existence of substantially faster shortcut-based algorithms.

Haeupler, Hladík, Rozhon, Tarjan, and Tětek \cite{BID_DIJKSTRA} present several results concerning the instance optimality of Dijkstra's algorithm and its variants. In particular, they claim that, under certain conditions, Dijkstra's algorithm is instance optimal, that their implementation of bidirectional Dijkstra's algorithm achieves instance optimality, and that in the unweighted case bidirectional breadth-first search is instance optimal up to a factor of $O(\Delta)$. We revisit these results and show that some of the arguments are incomplete and require refinement.

Our contributions are as follows. First, we present a family of counterexamples showing that the implementations of Dijkstra's algorithm and bidirectional Dijkstra's algorithm considered in \cite{BID_DIJKSTRA} are not instance optimal. More precisely, we show that they can be outperformed by a factor of $\Theta(\Delta)$ on certain graph families. Second, we identify the key issue in the proof of instance optimality for bidirectional Dijkstra's algorithm and propose a simple modification of the algorithm. We prove that the resulting variant is instance optimal in the weighted setting. This allows us to provide a simplified proof of instance optimality for the modified bidirectional Dijkstra's algorithm. Third, we revisit the unweighted setting and provide a simplified proof of the lower bound showing that no algorithm can achieve instance optimality up to a factor better than $O(\Delta)$. Finally, we make considerable progress on the open problem concerning instance optimality of bidirectional Dijkstra's algorithm in simple graphs. First, we show that every correct algorithm that uses a sufficiently small number of queries must in turn query a constant fraction of all the nodes in the graph. Secondly, we show the full claim for graph instances whose maximum degree of explored nodes is at most the square root of the number of explored edges. Some particular implications of our results include instance optimality for classes of graphs where the average degree of the explored nodes is bounded by an explicit constant, and classes where the number of explored edges is $\Omega(n^2)$, where $n$ is the number of explored nodes. This covers the sparsest and densest of problem instances, making the general claim more plausible.

The paper is organized as follows. In Section~\ref{sec:Preliminaries}, we introduce the fundamental concepts used throughout the paper. Section~\ref{sec:Dijkstra} studies a restricted setting that illustrates the main ideas behind our counterexamples. Section~\ref{sec: bid weighted} considers the general weighted setting and establishes instance optimality of a modified bidirectional Dijkstra algorithm. Section~\ref{sec: inweighted} discusses the unweighted case and approximation-related observations. Section~\ref{sec:Simple Graphs} addresses the more challenging setting of simple graphs and provides a partial resolution of the corresponding open problem. Finally, Section~\ref{sec:conclusion} concludes the paper.

\section{Preliminaries} \label{sec:Preliminaries}
A graph $G=(V, E)$ is an ordered pair, consisting of a set of vertices $V$ and a set of edges $E \subseteq V\times V$. 

We will follow \cite{BID_DIJKSTRA} and use the standard query model for sublinear
graph algorithms as discussed in chapter 10 of \cite{query_model}. The measure of complexity used is the number of queries performed. The graph in question is in general a weighted, directed, multi-graph with self-loops and parallel edges allowed.

Each node stores a list corresponding to its out-going edges, and a list corresponding to its in-going edges. If the graph is undirected then these two lists are the same. Each edge is on two incidence lists. Each edge consists of the two endpoints, as well as a weight. The nodes are numbered from $1$ to $n$. For a constant cost we can perform the following queries: 
\begin{enumerate}
    \item \textbf{Degree}($i$) which returns the degree of node $i$, in the case of undirected graphs we analogously have \textbf{InDegree}($i$) and \textbf{OutDegree}($i$).
    \item \textbf{Edge}($v,i$) which returns the the edge corresponding to the $i$-th neighbor of the node $v$, where $1 \le v \le n$ and $i \le deg_G(v)$.
\end{enumerate}

Now we will briefly describe the \textbf{shortest $st$-path problem}. We are given a weighted multi-graph $G$. The edge weights are determined by a weight function $l : V \times V \rightarrow \mathbb{R}_{>0}$. In our case it will be important to consider strictly positive edge weights. This edge weight function $l$ gives rise to a distance function $d : V\times V \rightarrow \mathbb{R}_{\ge 0}$ that maps any pair of nodes $(u,v)$ into the shortest distance from $u$ to $v$. The shortest $st$-path problem asks for an arbitrary path of shortest length starting at $s$ and ending at $t$.

\textbf{Dijkstra's algorithm} \cite{dijkstra1959} accepts a weighted multigraph with nonnegative edge weights, and a starting node $s$. It then computes the shortest distances by storing for each node $u$ the shortest path computed thus far $\hat d(s,u)$. In the beginning only the starting node $s$ is \emph{open}. The main loop of the algorithm stops when there are no more open nodes. In each iteration of the main loop the open node $u$ with the smallest computed distance $\hat d(s, u)$ is chosen and set \emph{closed}. Then for each of the open neighbors $v$ of $u$ their computed distance is updated as $\hat d(s,v)= \min (\hat d(s,v), \hat d(s,u)+\ell(u,v))$. Once a node has been closed its computed distance equals the true distance. Algorithm~\ref{dijkstra_algorithm} summarizes the full Dijkstra's algorithm.

\begin{algorithm}[H] 
\caption{Dijkstra's Algorithm} \label{dijkstra_algorithm}
\small
\begin{algorithmic}[1]
\Function{Dijkstra}{$G, s$}
    \State $\hat{d}(s, v) \gets +\infty$ for all $v \in V(G)$;
    \State $\hat{d}(s, s) \gets 0$;
    \State \textbf{Open} $s$;
    \While{an open vertex exists}
        \State Let $u$ be the open vertex with the smallest $\hat{d}(s, u)$;
        \State \textbf{Close} $u$;
        \For{each neighbor $v$ of $u$}
            \If{$v$ is not closed}
                \State $\hat{d}(s, v) \gets \min(\hat{d}(s, v), \hat{d}(s, u) + \ell(uv))$;
            \EndIf
        \EndFor
    \EndWhile
\EndFunction
\end{algorithmic}
\end{algorithm}

\textbf{Instance optimality}~\cite{original_instance_optimality} is a strong measure of algorithmic complexity that essentially gives guarantees regarding any possible instance of the algorithm.
\begin{definition}[Algorithm Correctness]
An algorithm $A$ is said to be correct if on any input it returns the correct answer with probability at least 0.9.
\end{definition}
Naturally, the 0.9 cutoff is arbitrary and can change in different contexts.
\begin{definition}[Instance Optimality] \label{def:instanc_opt}
An algorithm $A$ is instance optimal if it is correct and there exists a constant $c=O(1)$ such that for any input $x$ and for any correct algorithm $A'$ it holds that the expected complexity $T_A(x)$ and $T_{A'}(x)$ of $A$ and $A'$ on $x$ satisfy
$$T_A(x)\le c\cdot T_{A'}(x) \text{.}$$
\end{definition}

\section{Is Dijkstra's Algorithm Instance Optimal?} \label{sec:Dijkstra}

We shall start with the simplest example. Haeupler et al. \cite{BID_DIJKSTRA} first give a simple and restricted example of the kind of proof they focus on the their paper. The full statement and proof of Theorem 2 in \cite{BID_DIJKSTRA} is as follows:

\begin{quotedtheorem}[{\cite[Theorem 2]{BID_DIJKSTRA}}] \label{qt:thm2}
Let us have a directed weighted graph $G$ with positive weights and assume we are
given two vertices $s$, $t$. Assume the only operation we can do is to take a vertex we have seen and ask for its next out-neighbor (in an adversarial ordering) and the weight of the edge to that vertex.

Consider executing Dijkstra’s algorithm from $s$ and stopping it once we close some vertex $v$ with $\hat d(s,v)=\hat d(s,t)$. Then this algorithm correctly computes the $st$-distance. Furthermore, no correct deterministic algorithm $A$ can perform fewer queries on $G$.
\end{quotedtheorem}

\begin{quotedproof}[Proof (from \cite{BID_DIJKSTRA})]
First, we argue correctness. By the standard proof of correctness of Dijkstra’s algorithm, once we close the vertex $v$, we have $\hat d(s,v) = d(s,v)$. At the same time, vertices are closed in order of non-decreasing distance, meaning that $d(s, t) \ge d(s, v) = \hat d(s, t) \ge \hat d(s, t)$. Moreover, it always holds that $\hat d(s,t)\ge d(s,t)$. Thus, we have $\hat d(s,t) = d(s,t)$, meaning that the distance is correct.

For the sake of contradiction, let us have an algorithm that performs fewer queries than Dijkstra on $G$. Therefore, there has to be an edge $uv$ for $d(s,u) < d(s,t)$ that $A$ does not query.
We define a graph $G'$ where we replace the edge $uv$ by $ut$ with weight $\delta < d(s,t) - d(s,u)$. The distance between $s$ and $t$ in $G'$ is then $d(s,u) + \delta < d(s,t)$. However, \textbf{\textcolor{red}{the algorithm does not query this edge}}. Since the rest of the graph is exactly the same, the algorithm thus returns the
same answer on both $G$ and $G'$ , which implies that the algorithm is not correct.
\end{quotedproof}
\noindent \\
In the above proof, the error is highlighted in red. For instance, this does not hold if algorithm $A$ (sometimes) queries the neighbors of $t$. Since in the constructed case $G'$ the new edge is attached to $t$ this kind of algorithm then might act different given this new edge. In our counterexample we aim to do exactly this; condition the execution on the neighborhoods of $s$ and $t$.

Let us construct such an algorithm that will contradict the Quoted Theorem~\ref{qt:thm2} on a particular class of graphs. We will be concerned with undirected graphs; however, it is trivial to adjust the setting to directed graphs.
One such algorithm is presented as Algorithm~\ref{cheap_biddijkstra_algorithm}. We are careful to implement Algorithm~\ref{cheap_biddijkstra_algorithm} with only the operations allowed by the Quoted Theorem~\ref{qt:thm2}. The idea behind the algorithm is to first check for a shortcut of the form $(s,m,t)$ that connects $s$ and $t$ and see if it corresponds to the shortest path before defaulting to a more general search strategy (e.g. standard Dijkstra's algorithm) if no shortcut is found. Initially, we check if there exists a node such that its neighborhood is exactly $\{s,t\}$. The next step is to evaluate the lower bound on the shortest $s,t$ path as $\min_{(s,v)\in E(G)}\ell(s,v) + \min_{(v,t)\in E(G)}\ell(v,t)$ since any $st$-path must use at least one edge to leave $s$ and one edge to get to $t$, that is if there is no direct edge $(s,t)$ which alternative is also hard coded into Algorithm~\ref{cheap_biddijkstra_algorithm}. If we find such a node $m$ whose neighborhood is $\{s,t\}$ and for whom it holds that the lower bound is equal to $\ell(s,m)+\ell(m,t)$ then we can safely return $(s,m,t)$ as the path of shortest length without needing to explore further shortest paths. If there is no such node $m$, we simply resort to Algorithm~\ref{BID_DIJ_AUTH} (or any other correct shortest $st$-path algorithm). In order to fulfill the constraint of the Quoted Theorem~\ref{qt:thm2} one can replace Algorithm~\ref{BID_DIJ_AUTH} on Line~\ref{line: return} by any correct shortest $st$-path algorithm that uses only the allowed operations. This implies that Algorithm~\ref{cheap_biddijkstra_algorithm} is always correct.

\begin{algorithm}[H] 
\caption{Cheap Bidirectional st-Dijkstra} \label{cheap_biddijkstra_algorithm}
\small
\begin{algorithmic}[1]
\Function{CHEAP\_BID}{$G, s, t$}
    \State $m,t_{\text{pointer}}  \gets \text{Null}$ \Comment{$m$: node connecting $s$ and $t$, $t_{\text{pointer}}$: placeholder for $t$}
    \State $b_s, \ell_s \gets +\infty$ \Comment{$b_s$: lower bound weight to leave $s$, $\ell_s$: weight of $sm$}
    \State $b_t, \ell_t \gets +\infty$ \Comment{$b_t$: lower bound weight to reach $t$, $\ell_t$: weight of $mt$}
    \For{Next Neighbor $v$ of $s$}
        \If{$v=t$}
            \State $m \gets \text{Null}$
            \State \textbf{Break} the for loop
        \EndIf
        \State $N_v \gets \{\}$
        \For{Next Neighbor $u$ of $v$}
            \State $N_v \gets N_v \cup \{u\}$
            \If{$|N_v| > 2$} \label{line: prevent}
                \State \textbf{Break} the for loop
            \EndIf
            \If{$u=t$}
                \State $t_{\text{pointer}} \gets t$
                \State $\ell_t \gets \ell(v,t)$
                \State $\ell_s \gets \ell(s,v)$
            \EndIf
        \EndFor
        \If{$N_v=\{s,t\}$}
            \State $m \gets v$
        \EndIf
        \State $b_s \gets \min(b_s, \ell (s,v))$
    \EndFor
    \If{$t_{\text{pointer}} \not = \text{Null}$}
        \For{Next Neighbor $v$ of $t$}
            \State $b_t \gets \min(b_t, \ell (t,v))$
        \EndFor
    \EndIf
    \If{$m \not = \text{Null}$ and $b_s+b_t = \ell_s+\ell_t$}
        \State \Return Path $(s,m,t)$, Length $b_s+b_t$
    \Else
        \State \Return Shortest path computed by Algorithm~\ref{BID_DIJ_AUTH} \label{line: return}
    \EndIf
\EndFunction
\end{algorithmic}
\end{algorithm}

What is left is to construct a family of graph instances on which Algorithm~\ref{cheap_biddijkstra_algorithm} uses more than a constant factor fewer queries in expectation, compared to Dijkstra's algorithm as described in Quoted Theorem \ref{qt:thm2}. We denote an instance of such a graph as $G_D$ and it consists of 
\begin{itemize}
    \item Nodes: $V(G_D)$ has $3+2D$ nodes, namely $s,t,m$ and $s_1,s_2,\ldots,s_D, t_1,t_2,\ldots, t_D$.
    \item Edges: $E(G_D)$ has the following edges: $(s,m)$, $(m,t)$, nodes $s,s_1,\ldots,s_D$ form a clique, and finally nodes $t,t_1,\ldots,t_D$ also form a clique. All edges are undirected.
    \item Edge Weights: edges incident to $s$ or $t$ have arbitrary uniform positive weights, and all other edge weights are arbitrary (and positive).
\end{itemize}

Figure~\ref{fig:G3} shows an example of the graph $G_D$ for $D=3$. 

\begin{figure}[t] 
  \centering
  \scalebox{1.5}{%
    \begin{tikzpicture}[
  thick,
  dot/.style={circle, fill=black, inner sep=1.6pt},
  lab/.style={font=\small}
]


\coordinate (t) at (0,1.5); \coordinate (s) at (0,-1.5); 


\coordinate (t1) at (-0.6,0.27); 
\coordinate (t2) at ( 0.6,0.27); 
\coordinate (t3) at (0.0,0.75); 

\draw (t) -- (t1) -- (t2) -- (t); 
\draw (t3) -- (t); 
\draw (t3) -- (t1); 
\draw (t3) -- (t2); 

\node[dot] at (t) {}; 
\node[dot] at (t1) {}; 
\node[dot] at (t2) {}; 
\node[dot] at (t3) {}; 

\node[lab] at ($(t)+(0,0.2)$) {$t$}; 
\node[lab] at ($(t1)+(-0.23,-0.1)$) {$t_1$}; 
\node[lab] at ($(t2)+(0.25,-0.1)$) {$t_2$}; 
\node[lab] at ($(t3)+(0,-0.27)$) {$t_3$}; 


\coordinate (s1) at (-0.6,-0.27); 
\coordinate (s2) at ( 0.6,-0.27); 
\coordinate (s3) at (0.0,-0.75); 

\draw (s) -- (s1) -- (s2) -- (s); 
\draw (s3) -- (s); \draw (s3) -- (s1); 
\draw (s3) -- (s2); 

\node[dot] at (s) {}; 
\node[dot] at (s1) {}; 
\node[dot] at (s2) {}; 
\node[dot] at (s3) {};

\node[lab] at ($(s)+(0,-0.2)$) {$s$}; 
\node[lab] at ($(s1)+(-0.23,0.1)$) {$s_1$}; 
\node[lab] at ($(s2)+(0.23,0.1)$) {$s_2$}; 
\node[lab] at ($(s3)+(0.0,0.27)$) {$s_3$}; 


\coordinate (m) at (1.5,0.0); 

\draw (t) -- (m); 
\draw (m) -- (s); 
\node[dot] at (m) {}; 

\node[lab] at ($(m)+(0.25,0)$) {$m$};
\end{tikzpicture}
  }
  \caption{Depiction of the graph $G_3$.}
  \label{fig:G3}
\end{figure}

Note that for the sake of providing a counterexample to Quoted Theorem~\ref{qt:thm2} the clique $K_D$ attached to node $t$ is irrelevant and the same would hold in its absence; however, it will be relevant to the bidirectional case.

\begin{claim} \label{cl:dijk_not_io}
    Let $n=2D+3$.
    The expected number queries for Algorithm~\ref{cheap_biddijkstra_algorithm} on graph $G_D$ with input $s,t$ is $O(n)$, while the expected number of queries for Dijkstra's algorithm on the same graph and input is $\Omega(n^2)$.
\end{claim}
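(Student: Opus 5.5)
We prove the two bounds separately: first an explicit upper bound on the queries made by Algorithm~\ref{cheap_biddijkstra_algorithm} on $G_D$, then a lower bound for Dijkstra's algorithm stopped as in Quoted Theorem~\ref{qt:thm2}. Both bounds must hold for every adversarial ordering of the adjacency lists, so all counts below are worst case over orderings; they then also bound the expectations.

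\textbf{Upper bound for Algorithm~\ref{cheap_biddijkstra_algorithm}.} The neighbours of $s$ are $m,s_1,\dots,s_D$, so $\deg(s)=D+1$. The outer loop never meets $t$, because $(s,t)\notin E(G_D)$, and hence it runs over all $D+1$ neighbours. For each such $v$ the inner loop stops as soon as $|N_v|>2$ (Line~\ref{line: prevent}), so it costs at most $3$ queries per $v$.
\begin{itemize}
\item For $v=m$ the inner loop reads both neighbours, giving $N_m=\{s,t\}$. This sets $t_{\text{pointer}}=t$, $\ell_t=\ell(m,t)$, $\ell_s=\ell(s,m)$ and $m\gets m$.
\item Each $s_i$ has $D\ge 2$ neighbours besides $s$ (assuming $D\ge 2$), so $N_{s_i}\neq\{s,t\}$ and $m$ is never overwritten.
\end{itemize}
The loop over the neighbours of $t$ then costs $\deg(t)=D+1$ queries. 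All edges incident to $s$ share one weight $w_s$, and all edges incident to $t$ share one weight $w_t$. Therefore $b_s=w_s=\ell_s$ and $b_t=w_t=\ell_t$, the test $b_s+b_t=\ell_s+\ell_t$ succeeds, and the algorithm returns at once without calling Line~\ref{line: return}. The total is at most $3(D+1)+(D+1)+O(1)=O(D)=O(n)$ queries, whatever the ordering.

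\textbf{Lower bound for Dijkstra.} We have $d(s,t)=w_s+w_t$, and $d(s,s_i)=w_s$ for every $i$, since every path out of $s$ begins with an edge of weight $w_s$ and the edge $ss_i$ is direct. The stopping rule of Quoted Theorem~\ref{qt:thm2} ends the run only after closing some vertex $v$ with $\hat d(s,v)=\hat d(s,t)=w_s+w_t>w_s$.

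Dijkstra closes vertices in nondecreasing order of distance, so every vertex at distance $w_s$ is closed before any vertex at distance $w_s+w_t$. This set includes $s_1,\dots,s_D$ and $m$. When $s_i$ is closed, the algorithm must read its whole adjacency list, which has $D$ entries. This gives $D\cdot D=\Omega(D^2)=\Omega(n^2)$ queries.

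\textbf{Main obstacle.} The delicate point is that the stopping rule could fire before all the $s_i$ are scanned. Ties are harmless, because $w_t>0$ strictly separates the distance $w_s$ from $w_s+w_t$, and $t$ itself lies at distance $w_s+w_t$. So every $s_i$ is closed before termination in every tie-breaking order. We must also check that closing a vertex scans all of its out-edges. This holds in the standard implementation, and the ``next out-neighbor'' model of Quoted Theorem~\ref{qt:thm2} charges one query per edge read.
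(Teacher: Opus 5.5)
Your proof is correct and takes essentially the same route as the paper. For the upper bound you charge a constant cost per neighbour of $s$ (via the $|N_v|>2$ cut-off) plus one scan of $t$'s adjacency list. For the lower bound you observe that every $s_i$ is strictly closer to $s$ than $t$ is, so all $D$ of them are closed and fully scanned before the stopping rule can fire; this is the same argument as the paper's, which takes all weights equal so that $d(s,s_i)=d(s,t)/2$.

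Two small slips do not affect the $O(n)$ and $\Omega(n^2)$ bounds. You omit the queries made by the outer loop itself. You also say each $s_i$ has $D$ neighbours besides $s$, whereas it has $D$ in total; your later count of $D$ entries per list uses the right figure.
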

\begin{proof}
    Let $T_{C}(G_D,s,t)$ be the average number of queries performed by Algorithm~\ref{cheap_biddijkstra_algorithm} on graph $G_D$ with inputs $s,t$. 
    
    The first for loop is executed at most $n-1$ times. The only non constant operation is the nested for loop, that loops over the neighbors of a given neighbor of $s$. However, we see that Line~\ref{line: prevent} prevents the inner loop from execution more than 3 times hence its cost is constant per the iteration of the outer loop. The conclusion is that the total cost of the first for loop is $(n-1)\cdot O(1)=O(n)$.

    Next, in the case we found node $t$ as a neighbor of a nodes $m$ whose neighborhood equals $\{s,t\}$, we compute the lowest edge weight incident to $t$. We do this with a loop that executes $n-1$ times and uses $O(1)$ operations on each iteration. In total that is $(n-1)\cdot O(1)=O(n)$.

    Then in the case that we found a node $m$ and the path $(s,m,t)$ satisfies the lower bound for the $st$-path we can safely return $(s,m,t)$ as the shortest path. Since in the case of $G_D$ graphs with inputs $s,t$ we will always find such a node $m$ the expected number of operations is $T_C(G_D,s,t)=O(n)+O(n)+O(1)=O(n)$. 

    Next, we shall show that the expected number of operations used by Dijkstra's algorithm with the stopping condition $\hat d(s,v)= \hat d(s,t)$ on $G_D$ with inputs $s,t$ is $\Omega(D^2)$. At the first iteration of the while loop, all the nodes $m,s_1,\ldots,s_D$ will be added to the open priority queue. However, all the edge weights are the same, so the priority queue will not be able to distinguish node $m$ from the rest of the $s_i, 1\le i \le D$ nodes. Notice that the algorithm terminates when a node is closed whose distance is greater than or equal to the current $\hat d(s,t)$. However, this can only occur once $t$ is closed, since nodes $m,s_1,\ldots,s_D$ all have distances from $s$ equaling $d(s,t)/2$. Therefore, the algorithm will have to explore all nodes $s_1,s_2,\ldots,s_D$ before it can confirm $(s,m,t)$ to be the shortest path. For each of the nodes $s_1,s_2,\ldots,s_D$ the algorithm will have to perform $\Omega(D)$ queries (since they all have degree $D+1$). Therefore, the algorithm will perform $\Omega(D^2)$ queries before stopping, regardless of when it explores $m$. The final conclusion is that $T_{DIJ}(G_D,s,t)=\Omega(D^2)=\Omega(n^2)$, where $T_{DIJ}$ is the complexity of the Dijkstra's algorithm variant in question.
    
\end{proof}

\begin{corollary}
    Dijkstra's algorithm, with restrictions as in \cite{BID_DIJKSTRA}, is not instance optimal for the shortest $st$-path problem.
\end{corollary}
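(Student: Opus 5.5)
The corollary follows from Claim~\ref{cl:dijk_not_io} once we check that Algorithm~\ref{cheap_biddijkstra_algorithm} is an admissible competitor in the sense of Definition~\ref{def:instanc_opt}. I would argue by contradiction. Suppose Dijkstra's algorithm, with the stopping rule and query restrictions of Quoted Theorem~\ref{qt:thm2}, were instance optimal with some constant $c$. Then for every correct algorithm $A'$ and every input $x$ we would have $T_{DIJ}(x)\le c\, T_{A'}(x)$. I would take $A'$ to be Algorithm~\ref{cheap_biddijkstra_algorithm} and $x=(G_D,s,t)$, and derive a contradiction as $D\to\infty$.

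The first step is to confirm that $A'$ is correct on every input, not just on $G_D$. There are two exit branches.
\begin{itemize}
\item If it returns $(s,m,t)$, then $t$ is not a neighbor of $s$ (the loop breaks on finding $t$ and sets $m$ to Null). So every $st$-path uses one edge leaving $s$ and one edge entering $t$, and these are distinct edges. Its length is therefore at least $b_s+b_t$, which equals the length of $(s,m,t)$. Hence the returned path is optimal.
\item Otherwise it falls back to a correct algorithm on Line~\ref{line: return}, as the paper allows for any correct shortest $st$-path algorithm using only the permitted operations.
\end{itemize}
I would also note that $A'$ uses only next-neighbor queries on vertices already seen, so it respects the restrictions under which the quoted theorem is stated.

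The second step invokes Claim~\ref{cl:dijk_not_io}: $T_{A'}(G_D,s,t)=O(n)$ and $T_{DIJ}(G_D,s,t)=\Omega(n^2)$ with $n=2D+3$. Instance optimality would then give $\Omega(n^2)\le c\cdot O(n)$, which fails for large $D$. So no constant $c$ works. In fact the gap is $\Theta(n)=\Theta(\Delta)$, since $\Delta=D+1$.

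The main subtlety is the mismatch between randomized and deterministic models, and adversarial versus arbitrary orderings. The quoted theorem is about deterministic algorithms with an adversarial neighbor order. $A'$ is deterministic, and its $O(n)$ bound holds for every neighbor ordering, because the inner loop is capped at three queries per neighbor of $s$. Dijkstra's $\Omega(D^2)$ bound also holds for every ordering and every tie-breaking, because all the $s_i$ must be closed before $t$. So the comparison is valid under any ordering convention, and expectations reduce to worst-case counts.
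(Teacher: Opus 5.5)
Your argument is the paper's: Claim~\ref{cl:dijk_not_io} gives $O(n)$ queries for Algorithm~\ref{cheap_biddijkstra_algorithm} against $\Omega(n^2)$ for Dijkstra on $G_D$, so no constant $c$ exists, and your admissibility check for Algorithm~\ref{cheap_biddijkstra_algorithm} only makes explicit what the paper asserts before the algorithm. That check does rest on two unstated points: querying the neighbors of $t$ must reveal every edge entering $t$ (true in the undirected setting the paper adopts, but false for directed graphs with out-neighbor queries only, where $b_t$ is not a lower bound), and $\ell_s,\ell_t$ must be the weights of $sm$ and $mt$, which the pseudocode ensures only if they are recorded together with $m\gets v$; as written, a later neighbor $v\neq m$ of $s$ that meets $t$ before a third distinct neighbor overwrites them.
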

\begin{proof}
    With Claim~\ref{cl:dijk_not_io} we successfully showed that there cannot exist a constant $c=O(1)$ such that $T_{DIJ}(G_D,s,t)\le c\cdot T_{C}(G_D,s,t)$ for all $G_D$, hence Dijkstra's algorithm is not instance optimal.
\end{proof}

\section{Instance Optimality in Weighted Graphs} \label{sec: bid weighted}

Despite its near optimal asymptotic time complexity, Dijkstra's algorithm often performs worse on larger datasets compared to other approaches. With preprocessing available, powerful alternatives exist \cite{reach_for_Astar}. Here we focus on bidirectional Dijkstra's algorithm, originally proposed by Dantzig \cite{original_bid_dijk} and Nicholson \cite{Nicholson_orig_bid}. The idea is to alternate between two executions of Dijkstra's algorithm; one starting at $s$ and going forward, and one starting from $t$ and going backward (reversing the edges). Then at some time after the two executions meet the information of the two executions is combined to construct the path of shortest length. Much more details on bidirectional search are available in \cite{bidirectional_history}. In Figure~\ref{fig:runtime-er} we can see the results of experiments performed measuring the runtime for the shortest $st$-path problem comparing: Unidirectional Dijkstra's Algorithm as described by the Quoted Theorem~\ref{qt:thm2} and the Bidirectional Dijkstra's Algorithm as described by Algorithm~\ref{BID_DIJ_AUTH} which is the implementation as proposed by Haeupler et al. All data-points represent means of 10 iterations of the experiment.

\begin{figure}[t]
    \centering
    \begin{subfigure}[t]{0.48\textwidth}
        \centering
        \includegraphics[width=\textwidth]{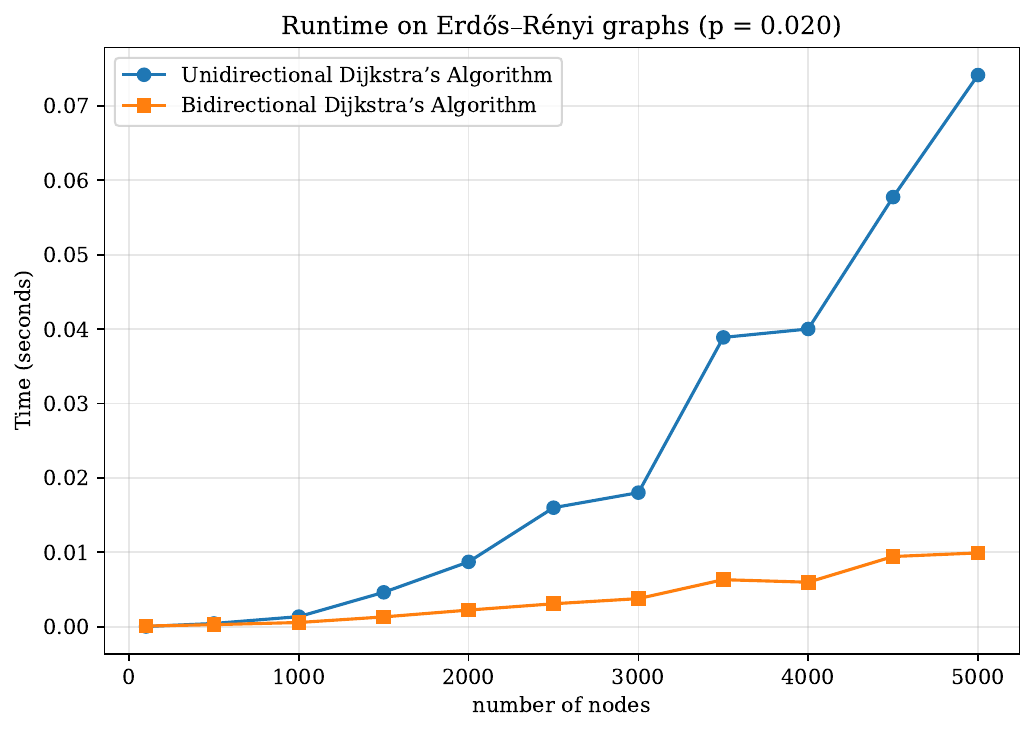}
        \caption{Erdős--Rényi graphs}
        \label{fig:runtime-er}
    \end{subfigure}
    \hfill
    \begin{subfigure}[t]{0.48\textwidth}
        \centering
        \includegraphics[width=\textwidth]{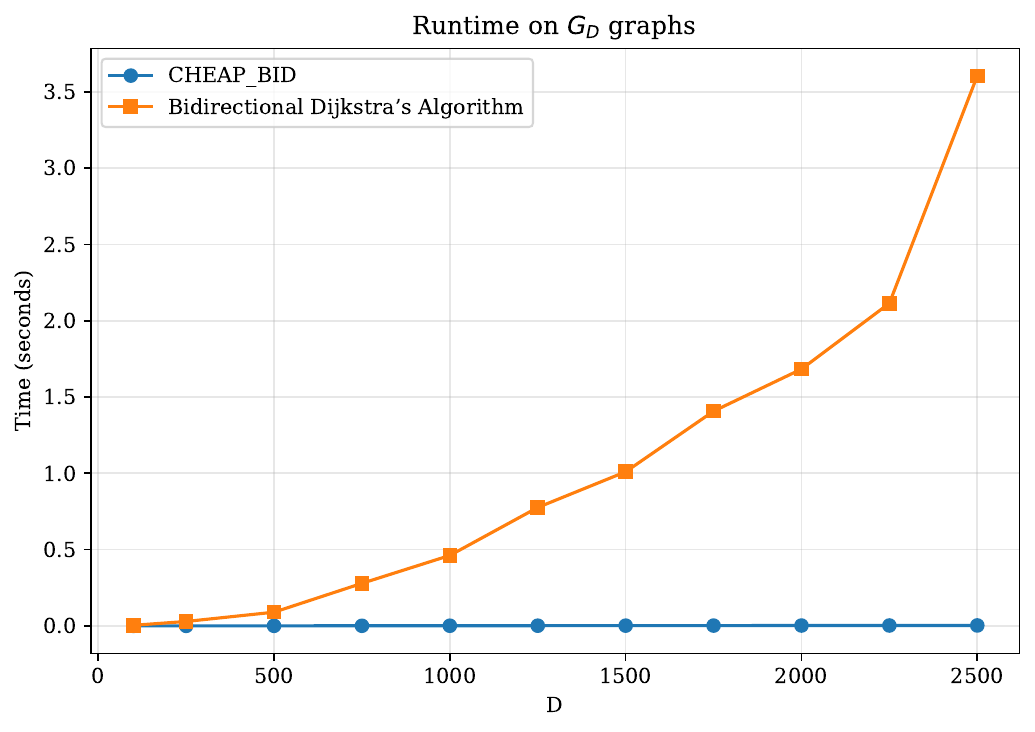}
        \caption{$G_D$ graphs}
        \label{fig:runtime-gd}
    \end{subfigure}
    \caption{Runtime comparison of shortest-path algorithms.
    (a) Comparing Dijkstra's Algorithm as Described in Quoted Theorem~\ref{qt:thm2} with Algorithm~\ref{BID_DIJ_AUTH}, on Erdős--Rényi graphs.
    (b) Comparing Algorithm~\ref{BID_DIJ_AUTH} and Algorithm~\ref{cheap_biddijkstra_algorithm}, on $G_D$ graphs. All the code used is available on \href{https://github.com/Matic054/BIDijkstra}{GitHub}}
    \label{fig:runtime-comparison}
\end{figure}

There are two key properties of Algorithm~\ref{BID_DIJ_AUTH}
\begin{enumerate}
    \item The switch between the forward and backward execution happens on every iteration of the main while loop, so that the exploration from both directions is balanced.
    \item The stopping condition is $d(s,u_s)+d(u_t,t) \ge \mu$ where $\mu$ is the length of the shortest $st$-path computed so far, and $u_s$, $u_t$ are the most recent nodes explored by the forward and backward execution respectively. This was first suggested in \cite{Pohl_BID}.
\end{enumerate}

\begin{algorithm}[H] 
\caption{Bidirectional Dijkstra’s Algorithm Implementation From \cite{BID_DIJKSTRA}} \label{BID_DIJ_AUTH}
\small
\begin{algorithmic}[1] 
\Require Graph $G(V, E)$, source vertex $s$, target vertex $t$
\State $\mu \gets +\infty$ \Comment{Length of the shortest path found so far}
\State $e_{\text{mid}} \gets \bot$ \Comment{Middle edge of the shortest path found so far}
\State $u_s \gets s$, $u_t \gets t$ \Comment{Vertices currently being explored in the two executions}
\State Initialize forward search from $s$ on $G$ and backward search from $t$ on $G$ with edges reversed
\While{neither of the two executions has terminated}
    \State Alternate between relaxing one edge in the forward and backward algorithms
\EndWhile
\State $uv \gets e_{\text{mid}}$
\State $P \gets$ ``shortest $su$-path from forward execution'' $+$ $e_{\text{mid}}$ $+$ ``shortest $vt$-path from backward execution''
\State \Return $P$

\Function{Forward\_Algorithm}{}
    \State $\hat d(s,\cdot) \gets +\infty$; $\hat d(s,s) \gets 0$
    \State Open $s$
    \While{an open vertex exists}
        \State Let $u$ be the open vertex with smallest $\hat d(s,u)$
        \State Close $u$
        \State $u_s \gets u$
        \If{$\hat d(s,u_s) + \hat d(u_t,t) \ge \mu$} \label{line: stoping conditon}
            \State terminate the whole algorithm \label{line: termination}
        \EndIf
        \For{each forward neighbor $v$ of $u$}
            \If{$v$ is not closed}
                \State $\hat d(s,v) \gets \min(\hat d(s,v),\, \hat d(s,u) + \ell(uv))$
            \EndIf
            \If{\textbf{\textcolor{red}{$v$ is closed in the backward execution}} \textbf{and} $\hat d(s,u) + \ell(uv) + \hat d(v,t) < \mu$}  \label{line: stop condition}
                \State $\mu \gets \hat d(s,u) + \ell(uv) + \hat d(v,t)$ \label{line: mu update}
                \State $e_{\text{mid}} \gets uv$
            \EndIf
        \EndFor
    \EndWhile
\EndFunction

\Function{Backward\_Algorithm}{}
    \State Analogous to \textsc{Forward\_Algorithm} with roles of forward and backward swapped
\EndFunction

\end{algorithmic}
\end{algorithm}

We shall see that the condition on Line~\ref{line: stop condition}, highlighted in red, breaks instance optimality.
Now we will examine Theorem 3 in \cite{BID_DIJKSTRA} and comment on the corresponding proof that is offered. 
\begin{quotedtheorem} [{\cite[Theorem 3]{BID_DIJKSTRA}}] \label{qt:BID_IO}
Algorithm~\ref{BID_DIJ_AUTH} is an instance-optimal algorithm, under query complexity, for the shortest st-path problem in both directed and undirected graphs with positive weights.
\end{quotedtheorem}
In their proof they arrive at a contradiction by assuming the existence of an algorithm $A$ which queries a sufficiently small fraction of edges that Algorithm~\ref{BID_DIJ_AUTH} queries. The main error in the proof is that they claim the shortest $st$-path has its length strictly larger than $d(s, u_1) + d(v_2, t)$, where $u_1v_1$ and $u_2v_2$ are two edges from the forward and backward executions respectively, that are accessed with a sufficiently small probability by algorithm $A$. More precisely, they assume that there was a point in the execution when it was true that $\hat d(s,u_s)+ \hat d(u_t,t) < \mu < + \infty$. We will exploit this by constructing a counterexample with an algorithm $A$ for which it holds that for every pair of edges $u_1v_1$ and $v_2u_2$, belonging to the forward and backward exploration respectively, that is accessed by Algorithm \ref{BID_DIJ_AUTH} and not by $A$, it holds that $d(s,u_1)+d(v_2,t) \ge \mu$, and hence the $G'$ construction in the proof does not work.

This is achieved with the same algorithm and family of graphs as in Section~\ref{sec:Dijkstra}. 

\begin{claim} \label{BID_not_instance_optimal}
Let $n=3+2D$. The expected number of queries for Algorithm~\ref{cheap_biddijkstra_algorithm} on graph $G_D$ with inputs $s,t$ is $O(n)$, while the expected number of queries for Algorithm~\ref{BID_DIJ_AUTH} on graph $G_D$ with inputs $s,t$ is $\Omega(n^2)$.
\end{claim}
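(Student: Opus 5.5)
The $O(n)$ bound for Algorithm~\ref{cheap_biddijkstra_algorithm} is exactly the first half of Claim~\ref{cl:dijk_not_io}. The algorithm is the same and the instance $G_D$ is the same, so I will simply invoke that part of the earlier proof. The first loop scans the at most $n-1$ neighbours of $s$ with $O(1)$ work each, because of Line~\ref{line: prevent}. The second loop scans the neighbours of $t$. The shortcut through $m$ is always found on $G_D$, so Line~\ref{line: return} is never reached. All the remaining work is therefore in the lower bound for Algorithm~\ref{BID_DIJ_AUTH}.

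For the $\Omega(n^2)$ bound, I will track when $\mu$ becomes finite and when the stopping condition on Line~\ref{line: stoping conditon} can first fire. Write $w$ for the common weight of edges incident to $s$ or $t$, so $d(s,t)=2w$. Every node other than $s,t$ is at distance exactly $w$ from $s$ or from $t$.
\begin{enumerate}
\item The forward search closes $s$ at distance $0$, and $\hat d(s,s)+\hat d(t,t)=0<\mu=\infty$, so it proceeds. Symmetrically for the backward search.
\item Only $s$, $t$ and $m$ lie in both balls of radius $w$. Moreover, $\mu$ can be set only when an edge $uv$ is relaxed with $v$ already closed in the other search. The first such event is therefore either the relaxation of $m$--$t$ from $m$ after $t$ has closed, or the symmetric relaxation of $m$--$s$. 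Alternatively, a node of the far clique has to be closed, which requires distance $>w$ on the near side, and that happens only after all nodes at distance $w$ are closed.
\item Until $\mu<\infty$, the test on Line~\ref{line: stoping conditon} cannot succeed. Once $\mu=2w$, termination requires $\hat d(s,u_s)+\hat d(u_t,t)\ge 2w$. Since $u_s$ and $u_t$ range over nodes at distance $0$ or $w$, this requires both searches to have moved off $s$ and $t$ respectively, and to have closed a node at distance $w$.
\end{enumerate}

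The adversarial ordering is the main lever. The weights are arbitrary and the priority queue breaks ties adversarially (or uniformly at random, for expectation). So $m$ can be placed last among the $D+1$ nodes at distance $w$ from $s$, and likewise last from $t$. Then $\mu$ stays $\infty$ until one search closes $m$ while the other has closed $t$ (or $s$). Before that, that search closes all $D$ clique nodes $s_i$, each of degree $D+1$, and relaxes all their edges, for $\Omega(D^2)$ queries. Under random tie-breaking, $m$ sits in a uniformly random position among $D+1$ nodes, so in expectation it is preceded by $\Omega(D)$ clique nodes. This again gives $\Omega(D^2)=\Omega(n^2)$ queries.

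The main obstacle is making step 3 airtight. I must check that the forward search closing $m$ actually sets $\mu$: this needs $t$ already closed in the backward search, which holds because closing $t$ is the backward search's first step. I must also check that after $\mu=2w$ the algorithm does not stop before the wasted $s_i$ work has happened. I would handle this by counting the queries performed before $m$ is closed, since $\mu=\infty$ throughout that phase and so no early exit is possible. The subtle point is the edge-by-edge alternation between the searches. It only changes the count by a factor of $2$, because the backward search does a symmetric amount of clique work in parallel.
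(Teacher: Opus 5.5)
Your proof is correct and follows the same route as the paper. Both arguments reuse Claim~\ref{cl:dijk_not_io} for the $O(n)$ bound, and both lower-bound Algorithm~\ref{BID_DIJ_AUTH} by the $\Omega(D)$ cost of each clique node closed before the first search closes $m$. Your observation that $\mu=\infty$ until then, so Line~\ref{line: stoping conditon} cannot fire, is a justification the paper leaves implicit. Your all-$m$-last tie-breaking gives the weaker adversarial-ordering version the paper mentions.

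One point needs tightening in the random tie-breaking case. Whichever search closes $m$ first sets $\mu$, and the run ends within $O(1)$ further steps. So the cost is $\Theta(D\min(k,r))$, where $k,r$ are the positions of $m$ in the two searches. What you need is $\mathbb{E}[\min(k,r)]=\Omega(D)$, not the expected position in a single search. This does hold: the paper computes it as $D(2D+1)/(6(D+1))$.
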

\begin{proof}
    Let $T_{C}(G_D,s,t)$ be the expected number of queries for Algorithm~\ref{cheap_biddijkstra_algorithm} on graph $G_D$ with inputs $s,t$. Using the same reasoning as in Claim~\ref{cl:dijk_not_io} we have $T_C(G_D,s,t)=O(n)$.

    Now it is left to show that $T_{BID}(G_D,s,t)=\Omega(n^2)$ where $T_{BID}(G_D,s,t)$ is the expected number of queries performed by Algorithm~\ref{BID_DIJ_AUTH} on graph $G_D$ with inputs $s,t$. We will consider the expectation with respect to all possible orderings of the nodes in the priority queue, which is a stronger notion than simply considering a particular adversarial ordering. Once either the forward or backward execution closes node $m$ the rest of the operations to find the shortest path $(s,m,t)$ take $O(1)$ queries, since the next closed node will trigger the termination condition on Line~\ref{line: stoping conditon}. However, since all edge weights are uniform neither execution can distinguish $m$ from the rest of the neighbors of $s,t$. Therefore, the number of queries used is determined only by the ordering in which node $m$ is closed in the two executions. Let $O_k=(s_{i_1},s_{i_2},\ldots,s_{i_k}, m, s_{i_{k+1}}, \ldots,s_{i_D})$ be one such ordering for the forward execution and $O_r=(t_{j_1},t_{j_2},\ldots,t_{j_r}, m, t_{j_{r+1}}, \ldots,t_{j_D})$ be the ordering for the backward execution. For all the nodes that are closed before $m$ all of their $D$ neighbors need to be examined for a $\Omega(D)$ number of queries. This is repeated $\min (k,r)$ times, until $m$ is closed. Hence given $O_k$ and $O_r$ the number of queries used is $2\min(k,r) \cdot \Omega(D)$, where the factor of 2 comes from the fact that the we alternate between the two executions. Now we will take an average over all orderings $O_k$ and $O_r$ to obtain the expected number of queries over a uniformly random ordering of the neighbors of \(s\) and \(t\):
\begin{align*}
    T_{BID}(G_D,s,t)
        &\ge \frac{1}{(D+1)^2}\sum_{k=0}^D\sum_{r=0}^D 2\min(k,r) \cdot \Omega(D) \\
        &= \frac{2\cdot \Omega(D)}{(D+1)^2}\sum_{k=0}^D\sum_{r=0}^D\min(k,r)\\
        &= \frac{2\cdot \Omega(D)}{(D+1)^2}\left( \sum_{i=0}^D i + 2\sum_{i=0}^{D-1}\sum_{j=i+1}^D i \right) \quad \text{(split cases $k=r$ and $k \not = r$)}\\
        &= \frac{2\cdot \Omega(D)}{(D+1)^2}\left( \binom{D+1}{2}
            + 2\sum_{i=0}^{D-1}\left(\binom{D+1}{2}-\binom{i+1}{2}\right) \right) \\
        &= \frac{2\cdot \Omega(D)}{(D+1)^2}\left( \binom{D+1}{2}
            + 2D\cdot\binom{D+1}{2}
            - 2\sum_{i=0}^{D-1}\binom{i+1}{2} \right)\\[6pt]
        &= \frac{2\cdot \Omega(D)}{(D+1)^2}\left( \binom{D+1}{2}
            + 2D\cdot\binom{D+1}{2}
            -2\binom{D+1}{3} \right)\\[6pt]
        &= \frac{2\cdot \Omega(D)}{(D+1)^2}\cdot\frac{D(D+1)(2D+1)}{6} = \Omega(D)\cdot\frac{2D(2D+1)}{6(D+1)}\\[6pt]
        &= \Omega(D)\cdot \Theta(D)= \Omega(D^2)=\Omega(n^2)
\end{align*}
\end{proof}

In Figure~\ref{fig:runtime-gd} we can see an experimental comparison of the runtime for Algorithm~\ref{cheap_biddijkstra_algorithm} and Algorithm~\ref{BID_DIJ_AUTH} on graphs $G_D$ side by side. All data-points are means for 20 iterations of the experiment. As expected, the results are consistent with the fact that the two algorithms exhibit different asymptotic runtimes, further strengthening the claim that Algorithm~\ref{BID_DIJ_AUTH} is not instance optimal.

In our constructed example, all edges of the form $(s_i,s_j)$ and $(t_i,t_j)$ for $1\le i<j\le D$ are accessed with probability 0 by Algorithm~\ref{cheap_biddijkstra_algorithm}. If we apply the reasoning from the proof of Theorem 3 in \cite{BID_DIJKSTRA}, we can see that the claim is that the shortest $st$-path has its length strictly larger than $d(s,s_i)+d(t_j,t)$. However, in our constructed example $d(s,s_i)+d(t_j,t)=d(s,m)+d(m,t)$ hence $d(s,s_i)+d(t_j,t)$ exactly equals the shortest $st$-path.

\subsection{Instance Optimality of Bidirectional Dijkstra's Algorithm}

 In our counterexample, for every pair of edges $u_1v_1, v_2u_2$ that Algorithm \ref{BID_DIJ_AUTH} queries but Algorithm \ref{cheap_biddijkstra_algorithm} does not, it holds that $d(s,u_1)+d(v_2,t) = d(s,t)$. 

The simplest adjustment of Algorithm~\ref{BID_DIJ_AUTH} for which our counterexample does not work, is to take the if statement on Line~\ref{line: stop condition} and relax it to $\hat d(s,u)+\ell(uv)+\hat d(v,t) < \mu$ omitting the condition that $v \text{ is closed in the backward execution}$. 

From this point forward, we shall refer to Algorithm~\ref{BID_DIJ_AUTH} with the additional relaxation on Line~\ref{line: stop condition} as \BID{} (BIdirectional Dijkstra).

This now raises the question: is BID instance optimal? In the next theorem we shall argue that the answer is yes, but first let us take a look at the following lemmas.

\begin{lemma} \label{m closed}
    Consider \BID on some input instance $(G,s,t)$. Let $m$ be a node such that $m\not =s, m \not = t$ and without loss of generality assume the final $\mu$ updated occurred from the forward execution as $\mu \gets d(s,u^s_\mu)+\ell (u^s_\mu, m)+\hat d(m,t)$. Then, when $m$ is closed and explored, $\mu$ must already have its final value.
\end{lemma}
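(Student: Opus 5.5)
The plan rests on one invariant of \BID{}, which holds because of the relaxation on Line~\ref{line: stop condition}. The invariant is that at every moment of the execution
\[
\mu \le \hat d(s,x) + \hat d(x,t) \quad \text{for every node } x .
\]
I would prove this first, by induction over the relaxation steps.

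\begin{itemize}
\item A tentative distance $\hat d(s,x)$ changes only when the forward execution explores an edge $ux$ and sets $\hat d(s,x) \gets \hat d(s,u)+\ell(ux)$.
\item In \BID{}, exploring that same edge also compares $\mu$ with $\hat d(s,u)+\ell(ux)+\hat d(x,t)$, without requiring $x$ to be closed backward.
\item So right after every decrease of $\hat d(s,x)$ we have $\mu \le \hat d(s,x)+\hat d(x,t)$. The backward execution is symmetric.
\item Since $\mu$ never increases, the inequality persists.
\end{itemize}

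This is exactly what the removed condition ``$v$ is closed in the backward execution'' destroyed in Algorithm~\ref{BID_DIJ_AUTH}.

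Next I would use the invariant to pin down the final update. By hypothesis, the last strict decrease of $\mu$ happens when the forward execution closes $u^s_\mu$ and explores the edge $u^s_\mu m$. The new value is $d(s,u^s_\mu)+\ell(u^s_\mu,m)+\hat d(m,t)$. For this update to be strict, two things must hold just before it:
\[
\mu > d(s,u^s_\mu)+\ell(u^s_\mu,m)+\hat d(m,t) \ge \hat d(s,m)+\hat d(m,t)\ \text{(new values)},
\]
and the invariant must have held for the old values. Together these force the update to strictly decrease $\hat d(s,m)$ or to occur after a strict decrease of $\hat d(m,t)$. I would then split according to which execution closes $m$ first.

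\textbf{Case 1: $m$ is closed in the forward execution before the final update.} Then $\hat d(s,m)=d(s,m)$ is already exact. Also $d(s,m)\le d(s,u^s_\mu)+\ell(u^s_\mu,m)$. By the invariant, at the moment of the final update $\mu \le d(s,m)+\hat d(m,t)$, which is at most the proposed new value. Hence the update is not strict, a contradiction. So in this case the final value of $\mu$ is set before $m$ is closed, which is the claim.

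\textbf{Case 2: $m$ is closed in the backward execution before the forward execution closes $u^s_\mu$.} I expect this case to be the main obstacle. When the backward execution closed $m$, we had $\hat d(m,t)=d(m,t)$. From then on the backward radius satisfies $\hat d(u_t,t)\ge d(m,t)$. At the later step where the forward execution closes $u^s_\mu$, the stopping test on Line~\ref{line: stoping conditon} is evaluated first. If $d(s,u^s_\mu)+\hat d(u_t,t)\ge \mu$, the algorithm terminates and the update never happens. So I would aim to show that the resulting value $d(s,u^s_\mu)+\ell(u^s_\mu,m)+d(m,t)$ cannot be strictly below the current $\mu$ while the stopping test fails.

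The first thing I would try is to use the backward exploration of $m$ itself. That exploration examined the in-edge $u^s_\mu m$ and compared $\mu$ with $\hat d(s,u^s_\mu)+\ell(u^s_\mu,m)+d(m,t)$. I would combine this with the fact that forward closures occur in nondecreasing order of distance, to argue that $\hat d(s,u^s_\mu)$ was already exact at that moment, or else that the stopping condition fires first. If this direct argument fails, the fallback is to read ``closed and explored'' in the lemma as referring to the execution that performs the final update, which reduces the statement to Case~1.
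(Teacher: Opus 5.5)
\textbf{Gap in Case~2.} Your invariant $\mu\le \hat d(s,x)+\hat d(x,t)$ is correct for \BID{}, and your Case~1 is fine. But Case~2 is the case the lemma actually exists for. Lemma~\ref{explored edge distance} and Lemma~\ref{lem:at_most_one_boundary_pair} invoke it exactly when $m$ has been closed by the \emph{backward} execution, and you leave this case unproved. Neither route you sketch works:
\begin{itemize}
\item The nondecreasing order of forward closures says nothing about whether $\hat d(s,u^s_\mu)$ is already exact when the backward execution explores $m$. The two searches advance independently, and $u^s_\mu$ may not yet have been reached by the forward search at all.
\item For the stopping test to fire when $u^s_\mu$ is closed, you would need $\hat d(u_t,t)\ge \mu-d(s,u^s_\mu)$. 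You only know $\hat d(u_t,t)\ge d(m,t)$, while strictness of the update gives $\mu-d(s,u^s_\mu)>\ell(u^s_\mu,m)+d(m,t)$. Nothing forces the test to fire.
\item The fallback of reading ``closed'' as ``closed by the forward execution'' proves a different statement, one that is useless for the lemmas that cite this one.
\end{itemize}

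\textbf{The missing step.} Apply your own invariant at the node $u^s_\mu$ rather than at $m$. Once the backward execution has closed $m$ and explored the edge $u^s_\mu m$ from it, we have $\hat d(u^s_\mu,t)\le \ell(u^s_\mu,m)+d(m,t)$ from then on. This holds either by that relaxation, or because $u^s_\mu$ is already backward-closed with $d(u^s_\mu,t)\le \ell(u^s_\mu,m)+d(m,t)$. When the forward execution later closes $u^s_\mu$, we have $\hat d(s,u^s_\mu)=d(s,u^s_\mu)$, so the invariant gives
\[
\mu \le d(s,u^s_\mu)+\hat d(u^s_\mu,t)\le d(s,u^s_\mu)+\ell(u^s_\mu,m)+d(m,t).
\]
The right-hand side is exactly the value proposed along $u^s_\mu m$, since $\hat d(m,t)=d(m,t)$. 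So that update is not strict, contradicting the assumption that it is the final one.

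\textbf{Comparison with the paper.} The paper does the same thing by hand. It splits on whether, when the backward execution explores $m$, $u^s_\mu$ is already forward-closed, or $\hat d(s,u^s_\mu)$ is already final, or $\hat d(s,u^s_\mu)$ will be finalized later by some predecessor $s'$. In that last subcase, the relaxation of $s'u^s_\mu$ triggers the $\mu$-check, which is precisely your invariant at $x=u^s_\mu$.

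\textbf{Case split.} Split on whether the edge $u^s_\mu m$ has been explored from $m$ before the final update, not merely whether $m$ is closed before $u^s_\mu$. Exploration is interleaved edge by edge.
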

\begin{proof}
    For the sake of contradiction, assume that $m$ is already closed and explored and that the $\mu \gets d(s,u^s_\mu)+\ell (u^s_\mu, m)+\hat d(m,t)$ update had not yet been done. Since $u^s_\mu$ is a neighbor of $m$ the backward execution must already have set a finite value for $\hat d (u^s_\mu, t)$ via $d(m,t) + \ell (u^s_\mu,m)$. Now we will consider two options. 
    
    First, consider that $u^s_\mu$ is already closed when the backward execution finished exploring edges from $m$. Then it must be the case that the backward execution would update $\mu$ as $\mu \gets d(m,t)+\ell (u^s_\mu,m) + d(s,u^s_\mu)$ which contradicts our assumption that the final update happened from the forward execution as $\mu \gets d(s,u^s_\mu)+\ell (u^s_\mu, m)+\hat d(m,t)$.

    The remaining option is that $u^s_\mu$ is not yet closed when the backward execution finishes exploring $m$. In that case it follows that the backward execution will set $\hat d(u^s_\mu,t) \gets d(m,t) + \ell (u^s_\mu,m)$ before $u^s_\mu$ is closed. If the forward execution has already computed the final value for $\hat d(s,u^s_\mu)$ at that time, then when the backward execution explores the edge $(u^s_\mu,m)$ it will set the final value for $\mu$ as $\mu \gets d(m,t)+\ell (u^s_\mu,m)+\hat d(s,u^s_\mu)$ again contradicting the assumption that the update happened from the forward execution. Finally, if the forward execution has not yet computed the final value for $\hat d(s,u^s_\mu)$ then consider the node that will set the final value for $\hat d (s,u^s_\mu)$. Let this node be $s'$. In that case, when the update $\hat d(s,u^s_\mu) \gets d(s,s')+\ell (s',u^s_\mu)$ occurs, the update $\mu \gets d(s,s') + \ell(s',u^s_\mu) + d(u^s_\mu,m)$ would also occur right after, since with the relaxation on Line~\ref{line: stop condition} we do not require $u^s_\mu$ to be closed in the backward execution. This contradicts the assumption that $\mu$ was updated as $\mu \gets d(s,u^s_\mu) + \ell (u^s_\mu,m) + d(m,t)$.

    Since all alternatives lead to contradictions the assumption that $m$ is already closed and explored from the backward execution and the final $\mu$ update had not yet been done must be false. 
\end{proof}

\begin{lemma} \label{explored edge distance}
Consider \BID on some input instance $(G,s,t)$. Let \(E_s\) and \(E_t\) be the sets of edges explored
by the forward and backward executions, respectively, and let \(\mu=d(s,t)\). Then for every edge \(u_1v_1\in E_s\)
explored from \(u_1\), and every edge \(v_2u_2\in E_t\) explored from
\(v_2\), we have
\[
    d(s,u_1)+d(v_2,t)<\mu.
\]
\end{lemma}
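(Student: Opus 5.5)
The plan is to reduce the statement to one invariant of \BID: as long as the algorithm has not terminated, the current forward and backward radii together are strictly less than $d(s,t)$. Two earlier facts will be used throughout. Forward-closed nodes appear in non-decreasing order of $d(s,\cdot)$, and backward-closed nodes in non-decreasing order of $d(\cdot,t)$. Moreover, once a node is closed its tentative distance is exact.

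\emph{Step 1 (reduction to the later closing).} Fix $u_1v_1\in E_s$, explored from $u_1$, and $v_2u_2\in E_t$, explored from $v_2$. Assume without loss of generality that $u_1$ was closed in the forward execution no earlier than $v_2$ was closed in the backward execution; the other case is symmetric. At the moment $u_1$ is closed, the value $u_t$ is a backward-closed node closed at or after $v_2$, so $d(u_t,t)\ge d(v_2,t)$. Since the edge $u_1v_1$ was explored, the test on Line~\ref{line: stoping conditon} failed at that moment. Writing $\mu_{\mathrm{cur}}$ for the value of $\mu$ then, this gives
\[
d(s,u_1)+d(v_2,t)\le d(s,u_1)+d(u_t,t)<\mu_{\mathrm{cur}}.
\]
It therefore suffices to show the claim below.

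\emph{Claim.} Whenever $r_s+r_t\ge d(s,t)$ holds at a stopping test, we already have $\mu_{\mathrm{cur}}=d(s,t)$. Here $r_s=d(s,u_s)$ and $r_t=d(u_t,t)$ are the current radii. Given the claim, the test above would have fired, a contradiction. Hence $d(s,u_1)+d(u_t,t)<d(s,t)=\mu$, and the lemma follows.

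\emph{Step 2 (proof of the claim).} Fix a shortest $st$-path $P$. Every node $x$ with $d(s,x)<r_s$ is forward-closed, and every node with $d(x,t)<r_t$ is backward-closed. Let $x$ be the last node on $P$ that is forward-closed, and let $y$ be its successor on $P$, so $d(s,y)\ge r_s$.

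\emph{Generic case: $d(y,t)<r_t$.} Then $y$ is backward-closed with $\hat d(y,t)=d(y,t)$. Whichever of $x$ and $y$ was explored later relaxes the edge $xy$ with both endpoint distances exact. This sets $\mu\le d(s,x)+\ell(xy)+d(y,t)=d(s,t)$. Since $\mu\ge d(s,t)$ always holds, we get equality.

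\emph{Tie case: $d(s,y)=r_s$ and $d(y,t)=r_t$.} Here $y$ may be closed in neither direction. Let $z$ be the successor of $y$ on $P$. By positive weights, $d(z,t)<r_t$, so $z$ is backward-closed, and when $z$ was explored the backward execution set $\hat d(y,t)=d(y,t)$. The path value through $y$ is then recorded by whichever relaxation of $xy$ or $yz$ happens last, with exact values at both ends. This is exactly where the relaxed Line~\ref{line: stop condition} is needed: it allows $\mu$ to be updated without requiring $y$ to be closed on the other side.

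\emph{Main obstacle.} The delicate point is timing in the tie case. One must check that the relaxation of $xy$ from $x$ happens after $\hat d(y,t)$ has become exact, or else that the relaxation of $yz$ from $z$ happens after $\hat d(s,y)$ has become exact. I would handle this by the same case analysis on the order of events used in the proof of Lemma~\ref{m closed}, applied with $m:=y$. In that analysis, the update performed by the later of the two events always produces $d(s,t)$. I also expect to need $P$ chosen with the fewest edges among shortest paths, so that the ``last forward-closed node'' is well defined even with $0$-distance ties at $s$. Finally, I would have to check the degenerate cases $x=s$ and $y=t$ separately.
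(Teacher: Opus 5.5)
Your Step~1 reduction and your tie case are fine, but the generic case of Step~2 fails as written. You take $x$ to be the last \emph{forward-closed} vertex of $P$ and argue that whichever of $x$ and $y$ is explored later relaxes $xy$ with both ends exact. That presupposes $x$ has already relaxed $xy$, and in exactly the situation of Step~1 it need not have. The test is performed right after $u_1$ is closed and before any of its edges are scanned. So if $u_1\in P$, then $x=u_1$ and its successor $y$ has $d(s,y)>r_s$, hence $d(y,t)<r_t$, and you are in the generic case. The only relaxation of $xy$ so far is then the backward one from $y$. At that moment $\hat d(s,x)$ need not yet have been exact: the backward scan of $y$ can precede the forward relaxation of $wx$, where $w$ is the predecessor of $x$ on $P$. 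The same issue arises at a backward test when $u_s$ is only partially scanned.

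The claim is still true in this case, but via a different pair of relaxations: $wx$ from $w$, and $xy$ from $y$. Whichever happens later sees both tentative distances exact, which is your tie-case argument centred at $x$ instead of $y$. The clean repair is to take $x$ to be the last vertex of $P$ with $d(s,x)<r_s$. At any stopping test, all such vertices are closed \emph{and fully scanned}, and likewise all vertices with $d(\cdot,t)<r_t$ on the backward side. Your dichotomy stays exhaustive: either $d(y,t)<r_t$, or $d(s,y)=r_s$ and $d(y,t)=r_t$. Both cases then go through as you describe. (Choosing $P$ with fewest edges is unnecessary, since weights are strictly positive.)

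With that fix, your proof is correct and takes a different route from the paper. The paper starts from the edge $(u^s_\mu,m)$ of the last $\mu$-update and splits on $d(s,u_1)$ versus $d(s,u^s_\mu)$. Either $u^s_\mu$ was scanned before $u_1$ closed, or $d(v_2,t)>d(m,t)$, so $m$ was backward-closed before $v_2$ and Lemma~\ref{m closed} says $\mu$ was already final. In both cases the test fires when the later of $u_1,v_2$ closes. This is short once Lemma~\ref{m closed} is available, and that lemma is reused in Lemma~\ref{lem:at_most_one_boundary_pair}. However, it identifies the final $\mu$ with $d(s,t)$, so it leans on correctness of \BID{}, which is only argued afterwards in Theorem~\ref{omission_optimality}.

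Your invariant says that at any test with $r_s+r_t\ge d(s,t)$, $\mu$ already equals $d(s,t)$. It is the classical frontier-crossing argument for bidirectional search, localized to one vertex of a fixed shortest path. It needs no knowledge of where the last update occurs, and it isolates exactly where the relaxed update rule is used. It also yields correctness of \BID{} as a by-product, since a firing test gives $r_s+r_t\ge\mu\ge d(s,t)$.
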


\begin{proof}
For the sake of contradiction assume there exists a pair of explored edges $u_1v_1, v_2u_2$, explored form the forward and backward execution respectively, from $u_1$ and $v_2$ respectively, such that $d(s,u_1)+d(v_2,t)\ge \mu$ where $\mu$ is the length of the shortest $st$-path. It follows that both $u_1$ and $v_2$ are closed in their respective executions. 
    
To demonstrate a contradiction it suffices to show that when the last of $u_1$ or $v_2$ is closed the final update of $\mu$ is already done, since then the algorithm would terminate on Line~\ref{line: termination} without having a chance to explore the given edge. Without loss of generality assume the final update of $\mu$ was made from the forward execution as $\mu \gets d(s,u^s_\mu)+\ell(u^s_\mu, m)+\hat d(m,t)$. The assumption is therefore $d(s,u_1)+d(v_2,t)\ge \mu = d(s,u^s_\mu)+\ell(u^s_\mu, m)+\hat d(m,t)$. 

First, consider the case when $d(s,u_1) > d(s,u^s_\mu)$. In that case $u^s_\mu$ is already closed and explored in the forward execution when $u_1$ gets closed and hence $\mu$ already has its final value. In that case whichever node is closed second will trigger the termination of the algorithm.

Now consider the second case $d(s,u_1) \le d(s,u^s_\mu)$. This implies that $d(v_2,t) \ge \ell (u^s_\mu,m)+\hat d(m,t)$, which in turn implies that $m$ is already closed and explored at the time that $v_2$ is closed in the backward execution. By Lemma \ref{m closed} it follows that if $m$ is closed and explored the $\mu \gets d(s,u^s_\mu)+\ell(u^s_\mu, m)+\hat d(m,t)$ update had already been done. As before, whichever node is closed second will trigger the termination of the algorithm. 

This shows that it is impossible for such a pair of edges $u_1v_1$ and $v_2u_2$ to exist, which concludes our proof.
\end{proof}

Now we will use Lemma~\ref{explored edge distance} to show that the \BID is indeed instance optimal.

\begin{theorem} \label{omission_optimality}
    \BID is an instance-optimal algorithm, under query complexity, for the shortest $st$-path problem in both directed and undirected multi-graphs with positive weights.
\end{theorem}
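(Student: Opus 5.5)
The proof follows the standard instance-optimality template: correctness, then a lower bound on any correct algorithm, obtained by combining a counting argument with a graph-modification argument powered by Lemma~\ref{explored edge distance}.

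\textbf{Correctness and cost of \BID.} Correctness is the usual argument for bidirectional Dijkstra. At termination $\hat d(s,u_s)+\hat d(u_t,t)\ge\mu$. Any $st$-path shorter than $\mu$ would contain an edge $xy$ with $x$ closed in the forward execution and $y$ closed in the backward execution, or would have the corresponding relaxation already performed. The relaxed update on Line~\ref{line: stop condition} only adds candidate values for $\mu$, and each such value is the length of an actual path, so $\mu$ is still an upper bound on $d(s,t)$.

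Let $E_s$ and $E_t$ be the edge sets explored by the two executions. Because the executions alternate edge by edge, $|E_s|$ and $|E_t|$ differ by at most one. The total number of queries is therefore $O(|E_s|+|E_t|)=O(\min(|E_s|,|E_t|)+1)$, counting the degree queries as $O(1)$ per closed node. It thus suffices to show that every correct algorithm $A$ performs $\Omega(\min(|E_s|,|E_t|))$ expected queries on $(G,s,t)$.

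\textbf{Lower bound for an arbitrary correct $A$.} Assume for contradiction that $A$ makes at most $\varepsilon\min(|E_s|,|E_t|)$ expected queries for a small constant $\varepsilon$. By linearity of expectation, some edge $u_1v_1\in E_s$ (explored from $u_1$) is queried by $A$ with probability at most $\varepsilon$ over its randomness, and likewise some edge $v_2u_2\in E_t$ (explored from $v_2$). A Markov-style counting argument gives a pair that is simultaneously unqueried with probability at least $1-2\varepsilon$.

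Lemma~\ref{explored edge distance} gives $d(s,u_1)+d(v_2,t)<\mu=d(s,t)$. Build $G'$ by redirecting these two incidence-list slots so that together they form a single new edge from $u_1$ to $v_2$. Give this edge weight $\delta$ with $0<\delta<d(s,t)-d(s,u_1)-d(v_2,t)$. In multigraphs this rewiring is legal, and all degrees are preserved. In $G'$ the distance becomes $d_{G'}(s,t)\le d(s,u_1)+\delta+d(v_2,t)<d(s,t)$. The paths realizing $d(s,u_1)$ and $d(v_2,t)$ should not use the removed edges, which I would justify from the Dijkstra exploration order, using the fact that shortest-path-tree edges into $u_1$ are explored before $u_1$ is closed. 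Since $A$ fails to distinguish $G$ from $G'$ with probability at least $1-2\varepsilon$, it outputs the same answer on both with probability above $0.2$ for small $\varepsilon$. This contradicts correctness with probability $0.9$ on both instances.

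\textbf{Main obstacle.} The delicate step is the rewiring, specifically making sure that $A$'s view really is identical on $G$ and $G'$. The two edges $u_1v_1$ and $v_2u_2$ each appear on two incidence lists, namely those of $u_1$, $v_1$, $v_2$ and $u_2$. The modification changes all four slots, so "$A$ does not query the edge" must mean that $A$ queries none of these slots. Handling this needs a cleaner replacement, such as a $G'$ in which $v_1$ is joined to $u_2$ to keep the degrees fixed. I would first try the pairing $u_1v_2$ together with $v_1u_2$, which preserves every degree. This new $v_1u_2$ edge must not itself create a false shortcut that breaks the argument, but that does no harm, since it only decreases $d_{G'}(s,t)$ further.

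The directed case is handled by using out-lists for $E_s$ and in-lists for $E_t$. The edge case where $E_s$ or $E_t$ is tiny (such as $s=t$, or a direct $st$ edge) is absorbed into the $+1$.
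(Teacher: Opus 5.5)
Your proposal follows the paper's proof essentially step for step: linearity of expectation gives a low-probability edge in each of $E_s$ and $E_t$, Lemma~\ref{explored edge distance} gives $d(s,u_1)+d(v_2,t)<\mu$, and the degree-preserving switch to $u_1v_2$ (weight $\delta$) and $v_1u_2$, combined with a union bound and a coupling, contradicts $0.9$-correctness. Two small tidy-ups remain. First, treat the degenerate case $u_1v_1=u_2v_2$ separately, as the paper does, by just lowering that edge's weight. Second, the path-avoidance issue you flag, which the paper silently skips, is closed by a distance argument rather than by exploration order alone: a shortest $s$--$u_1$ path cannot visit $v_2$ (else $d(s,v_2)+d(v_2,t)<\mu$), and in the undirected case it can use $u_1v_1$ only as a last edge $v_1u_1$, whence $d(s,v_1)<d(s,u_1)$ and the edge was also explored from $v_1$, so you may take $v_1$ as the closed endpoint instead (symmetrically for the $v_2$--$t$ path).
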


\begin{proof} 
    \textbf{Correctness.} By the structure of \BID it is clear that for every edge $uv$ the algorithm will consider $\hat d (s,u)+\ell(u,v)+\hat d(v,t)$ as a shortest path candidate at the latest when the last of $u$ and $v$ is closed and explored by their respective executions, given that the algorithm does not terminate beforehand. Secondly, since nodes are closed in non-decreasing distance order and the edge weights are positive, it is clear that when nodes $u_s$ and $u_t$ are closed, all subsequent $st$-path candidates will have their lengths of at least $d(s,u_s)+d(u_t,t)$. It follows that when $u_s$ and $u_t$ are closed, all path candidates of length less than $d(s,u_s)+d(u_t,t)$ have already been examined. Finally, it follows that when the condition on Line~\ref{line: stoping conditon} is satisfied the shortest path had already been considered, and $\mu$ was updated accordingly on Line~\ref{line: mu update}. This concludes the correctness proof.
    
    \textbf{Instance Optimality.} Let $E_s$ and $E_t$ be the sets of edges accessed by the forward and backward executions respectively. Since the two executions alternate edge relaxations, we have $|E_s| = |E_t| \pm 1$.
    
    Suppose, for the sake of contradiction, that there exists a correct algorithm $A$ who on $G$ queries at most $\bigl(|E_s|+|E_t|\bigr)/16$ edges in expectation.
    Let $Q_e$ be the indicator random variable for the event that $A$ queries edge $e$. Then
    \[
    \sum_{e\in E_s\cup E_t}\Pr[Q_e=1]
    =
    \mathbb{E}\left[\sum_{e\in E_s\cup E_t}Q_e\right]
    \le
    \frac{1}{16}\bigl(|E_s|+|E_t|\bigr).
    \]
    
    By Lemma~\ref{explored edge distance}, all edges  $u_1v_1\in E_s$, $u_2v_2\in E_t$ satisfy $d(s,u_1)+d(v_2,t)< \mu$.
    
    Since $|E_s|=|E_t|\pm 1$, neither \(E_s\) nor \(E_t\) can contain more than a constant fraction of all edges in
    \(E_s\cup E_t\). If every edge of \(E_s\) had query probability greater than
    \(1/5\), then we would get
    \[
        \sum_{e\in E_s\cup E_t}\Pr[Q_e=1] > \frac{1}{5}(|E_s|-1) > \frac{1}{16}(|E_s|+|E_t|)
    \]
    for sufficiently large \(|E_s|+|E_t|\). This contradicts the assumed bound. Hence, there exists an edge
    \(u_1v_1\in E_s\) with
    \[
        \Pr[Q_{u_1v_1}=1]\le \frac{1}{4}.
    \]
    The same argument gives an edge \(u_2v_2\in E_t\) with
    \[
        \Pr[Q_{u_2v_2}=1]\le \frac{1}{4}.
    \]
    Note that we are without loss of generality assuming that the edge $u_1v_1$ was explored from $u_1$ in the forward execution, and the edge $u_2v_2$ was explored from $v_2$ in the backward execution.
    In the case that $u_1v_1=u_2v_2$ we construct $G'$ by simply taking $G$ and decreasing the weight of $u_1v_1$ to some $\delta < \mu - d(s,u_1) - d(v_2,t)$.
    In the case that $u_1v_1 \not=u_2v_2$ we construct $G'$ by taking the same graph $G$, but replacing the edges $u_1v_1$ and $u_2v_2$ with $u_1v_2$ and $u_2v_1$. In the case of undirected graphs this clearly does not affect the degree of the nodes. In the case of directed graphs orient the edges appropriately to preserve in and out degrees, namely replace $u_1v_1, u_2v_2$ with $u_1v_2, u_2v_1$. Since we have $d(s,u_1) + d(v_2,t) < \mu$ we set the weight of $u_1v_2$ to be some $\delta < \mu - d(s,u_1) - d(v_2,t)$, and the edge weight of $u_2v_1$ to be arbitrary. Hence, this produces a new shortest $st$-path in $G'$ through $u_1v_2$.
    By the union bound,
    \[
    \Pr[Q_{u_1v_1}=1 \text{ or } Q_{u_2v_2}=1]
    \le
    \frac{1}{2},
    \]
    and therefore
    \[
    \Pr[Q_{u_1v_1}=0 \text{ and } Q_{u_2v_2}=0]
    \ge
    \frac{1}{2}.
    \]
    The only queries whose answers differ between $G$ and $G'$ are queries involving $u_1v_1$ or $u_2v_2$ on $G$, or equivalently, queries involving $u_1v_2$ or $u_2v_1$ on $G'$. Therefore, with probability at least $1/2$, the executions of $A$ on $G$ and $G'$ are identical, assuming the same internal randomness. On this event, $A$ returns the same answer on both inputs even though the correct shortest $st$-paths differ.

    Consequently, $A$ is incorrect on at least one of $G$ and $G'$ with probability at least \(0.5\cdot 0.5=0.25\).
    This contradicts the requirement that a correct algorithm succeeds with probability at least $0.9$, equivalently that it errs with probability at most $0.1$.
    
    Therefore every correct algorithm must perform $\Omega(|E_s|+|E_t|)$ queries in expectation. Since \BID performs $O(|E_s|+|E_t|)$ queries, it is instance optimal.
\end{proof}

\section{Approximate Instance Optimality in Unweighted Graphs} \label{sec: inweighted}

Intuitively, it is more difficult to produce an instance optimal shortest $st$-path algorithm in the unweighted setting, since some lower bounds on the distances can be deduced by default. Theorem 6.1 in \cite{BID_DIJKSTRA} states that Algorithm~\ref{BID_DIJ_AUTH} is instance optimal for the unweighted shortest $st$-path problem, up to a factor of $\Delta$. If one uses the unweighted version of graph $G_D$ and Algorithm \ref{cheap_biddijkstra_algorithm} on it one exactly achieves this bound. They go on to show that this is indeed the best one can do in terms of instance optimality in the unweighted case. 

\begin{quotedtheorem}[{\cite[Theorem 6.2]{BID_DIJKSTRA}}] \label{qt:delta}
Assume the shortest $st$-path problem when the allowed graph weights come from a
set $W$ with $\nu = min(W) > 0$ and we restrict the class of input graphs to those of degree at most $\Delta$. Then there is no algorithm that is instance optimal, under both query and time complexity, for the problem up to a factor of $O(\Delta)$.
\end{quotedtheorem}

\begin{proof}[Simplified Proof of Theorem~\ref{qt:delta}] \noindent \\
    Assume the allowed edge weights come from \(W\), with
\(\nu=\min(W)>0\). For each \(i\in\{1,\ldots,\Delta\}\), let \(G_i\)
be an undirected graph consisting of two stars centered at \(s\) and
\(t\), each with \(\Delta-1\) leaves, together with one additional edge
\((s,t)\) of weight \(\nu\). The incidence-list ordering at \(s\) is
chosen so that \(\mathrm{Edge}(s,i)=(s,t)\). All other edges have weights
in \(W\). Thus the maximum degree is at most \(\Delta\).

If \(i\) is chosen uniformly at random, then any correct algorithm for
the shortest \(st\)-path problem must, with constant probability, query
a constant fraction of the incident edges of \(s\) before finding the
edge \((s,t)\). Hence its expected query and time complexity is
\(\Omega(\Delta)\).

Therefore, for any candidate algorithm \(A\), there exists some fixed
instance \(G_k\) on which \(A\) has expected complexity
\(\Omega(\Delta)\).

On this fixed instance \(G_k\), consider the following correct
algorithm: first query \(\mathrm{Edge}(s,k)\). If this edge is
\((s,t)\) and has weight \(\nu\), return the path \((s,t)\). This path
is shortest because no edge has weight smaller than \(\nu\). Otherwise,
discard the attempt and run any correct shortest-path algorithm, for
example Dijkstra's algorithm.

On \(G_k\), this tailored algorithm runs in \(O(1)\) time and queries.
Thus \(A\) is worse by a factor \(\Omega(\Delta)\) on \(G_k\). Hence no
algorithm can be instance-optimal up to a factor \(o(\Delta)\).\end{proof}

Note that our proof of Quoted Theorem~\ref{qt:delta} chooses a particular instance and ordering of nodes, while the original proof offered in \cite{BID_DIJKSTRA} does not depend on node ordering. Strictly speaking, an instance optimal algorithm must be optimal up to a constant factor with respect all possible labelings of any input instance, however for other slightly relaxed instance-optimality-like measures, such as instance optimality in the random-order setting \cite{randm_order_instance_optimality}, this is not the case.

\paragraph{Approximation and Unknown Edge Weights.}
The proof technique used in Theorem~\ref{qt:delta} relies crucially on the knowledge of the minimum possible edge weight. Without such knowledge, the argument no longer implies a lower bound for approximation algorithms. Nevertheless, a related observation can be made. Consider the family of graphs $G_i$ from the proof of Theorem~\ref{qt:delta}. If an algorithm correctly guesses that the direct edge $(s,t)$ appears at a prescribed position in the adjacency list of $s$, then it may immediately return this edge without exploring the remainder of the graph. Even though the algorithm cannot certify optimality, the returned path has weight at most $w_{\max}$, while any feasible $st$-path has weight at least $w_{\min}$. Therefore, the returned solution is automatically a $\frac{w_{\max}}{w_{\min}}$-approximation, since
$$
\frac{\ell(s,t)}{\mathrm{OPT}}
\le
\frac{w_{\max}}{w_{\min}}.
$$
This demonstrates that once approximation guarantees are permitted, the hard instances used in the proof of Theorem~\ref{qt:delta} become substantially easier. In particular, the existence of a direct edge $(s,t)$ can be exploited to obtain a $\frac{w_{\max}}{w_{\min}}$-approximate solution in constant time on the corresponding instance. While this observation does not establish any lower bound for approximation algorithms, it illustrates that the ratio $\frac{w_{\max}}{w_{\min}}$ naturally arises as a threshold beyond which the exact shortest path need not be identified.

\section{Instance Optimality in Simple Graphs} \label{sec:Simple Graphs}

The instance optimality proofs we have discussed thus far rely on constructing new graphs from existing ones, by adding edges. This is only valid under the assumption that the input instances can include multigraphs, as the edges we are adding in Quoted Theorems \ref{qt:BID_IO}, \ref{qt:thm2} can in principle lead to multiple edges between two nodes. For this reason, the authors of \cite{BID_DIJKSTRA} leave the instance optimality status of Algorithm~\ref{BID_DIJ_AUTH} in simple graphs as an open question. Here we aim to make progress towards this question. 

Note that we are assuming that adding nodes to the graph is not a valid option, since otherwise the problem becomes trivial: if one needs to add an edge between nodes $u$ and $v$ in the construction of $G'$, but the edge $uv$ already exists, then just add a quasi node $m$ with edges $um$ and $mv$. Every algorithm that does not condition on the number of nodes, and does not query the edge $uv$ will not be able to distinguish between $G$ and $G'$. Hence in the remainder of this section we consider the more general case where the execution of the algorithms can depend on the number of nodes.

We first prove a reservoir theorem showing that any algorithm which avoids querying a sufficiently large part of the vertex set must already pay a constant fraction of \BID{}'s search cost. This immediately yields instance optimality for instances with a large vertex reservoir and for classes satisfying \(|E_s|\le K|N_s|\), where $|E_s|$ is the number of edges explored by the forward execution, $N_s$ are the nodes accessed by the forward execution, and $K=O(1)$ is a constant. We then develop a complementary cross-edge counting argument, which applies when the obstruction is not unused vertices but rather dense interaction between the explored forward and backward regions.

All the results in this section are agnostic as to whether the graph in question is directed or not.

\begin{definition}
    A node is considered accessed, if its degree or an edge incident to it is queried.
\end{definition}

\begin{theorem}\label{thm:reservoir_node}
Consider executing \BID on a simple graph $G$ with positive edge weights, on input $s,t$. Let $E_s$ and $E_t$ be the sets of edges explored by the forward and backward executions, respectively, and set
\[
S=|E_s|+|E_t|.
\]
Let $A$ be any randomized algorithm for the shortest $st$-path problem that is correct with probability at least $0.9$ on every input. Suppose there exists a vertex $r\in V(G)$ that $A$ accesses with probability at most $0.25$ on $G$. Then, for a universal constant $\alpha>0$,
\[
T_A(G,s,t)>\alpha S.
\]
In particular, one may take $\alpha=1/16$, after absorbing finitely many trivial small instances into the constant.
\end{theorem}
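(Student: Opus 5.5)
We will argue by contradiction, following the template of Theorem~\ref{omission_optimality}, and use the unaccessed vertex $r$ as a relay. This replaces the edge-swap construction, which may create parallel edges in a simple graph. So suppose $T_A(G,s,t)\le \alpha S$ with $\alpha=1/16$.

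\textbf{Step 1: find two rarely queried edges.} Exactly as in the proof of Theorem~\ref{omission_optimality}, we use $|E_s|=|E_t|\pm1$ and the bound $\sum_{e\in E_s\cup E_t}\Pr[Q_e=1]\le S/16$. Averaging then yields an edge $u_1v_1\in E_s$, explored from $u_1$, and an edge $u_2v_2\in E_t$, explored from $v_2$, each queried by $A$ with probability at most a small constant, say $1/4$ (any constant below $1/5$ suffices for large $S$). By Lemma~\ref{explored edge distance} we have $d(s,u_1)+d(v_2,t)<\mu=d(s,t)$. Let $\varepsilon=\mu-d(s,u_1)-d(v_2,t)>0$.

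\textbf{Step 2: rewire through $r$ while keeping the graph simple and all degrees fixed.} We build $G'$ by deleting $u_1v_1$ and $u_2v_2$, and deleting two edges incident to $r$, say $ra$ and $rb$. We then add $u_1r$ and $rv_2$ with weights summing to less than $\varepsilon$, and add $av_1$ and $u_2b$ (or the appropriately oriented versions in the directed case) with large weights. Every vertex keeps its degree, and in the directed case its in- and out-degree. The new path $s\to u_1\to r\to v_2\to t$ has length $<\mu$, so the correct answers on $G$ and $G'$ differ.

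The queries whose answers change are those touching $r$, or touching the edges $u_1v_1$, $u_2v_2$, $ra$, $rb$. Queries of $ra$ or $rb$ from $r$'s side count as accessing $r$. Queries of $ra$ or $rb$ from the sides of $a$ and $b$ also return $r$ as an endpoint, so under the natural convention that revealing an edge reveals and accesses both endpoints, these are covered by the event of accessing $r$ as well.

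\textbf{Step 3: conclude.} By the union bound, with probability at least $1-1/4-1/4-1/4=1/4$ the algorithm $A$ avoids $r$, $u_1v_1$ and $u_2v_2$, and on that event it runs identically on $G$ and $G'$. Hence it errs on one of the two inputs with probability at least $1/8$, which is more than $0.1$, a contradiction. If needed, we sharpen the constants in Step 1 (take edge probabilities $\le 1/20$ and adjust $\alpha$) so that the error bound clearly exceeds $0.1$.

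\textbf{Main obstacle.} The delicate part is Step 2's simplicity and degeneracy cases. The added edges $u_1r$, $rv_2$, $av_1$, $u_2b$ must not already exist. We must also handle cases where $r$ coincides with one of $u_1,v_1,u_2,v_2,s,t$, where $\deg(r)<2$, or where $u_1v_1=u_2v_2$. To handle these, we will choose $a,b$ among $r$'s neighbors to avoid collisions, and use the freedom from averaging to pick among many candidate low-probability edges. If $r$ already neighbors $u_1$, we simply reweight that existing edge, since any query to it accesses $r$ anyway. Leftover tiny configurations are absorbed into the constant, as the statement permits.
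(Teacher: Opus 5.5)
\textbf{There is a genuine gap in Step 2.} Your rewiring keeps every degree fixed, including that of $r$. That forces you to delete two edges $ra,rb$ at $r$ and to add $av_1$ and $u_2b$, and nothing in the hypotheses makes this possible.
\begin{itemize}
\item The vertex $r$ may have degree $0$ or $1$. An isolated vertex is a typical reservoir vertex in Corollary~\ref{cor:large_vertex_reservoir}, and such vertices occur in arbitrarily large instances, so they cannot be ``absorbed into the constant''.
\item Every neighbour of $r$ may already be adjacent to $v_1$ (resp.\ $u_2$). Then $av_1$ (resp.\ $u_2b$) would be a parallel edge.
\item Deleting $ra,rb$ may destroy the shortest $s$--$u_1$ or $v_2$--$t$ route. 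For example, $r$ could be an internal degree-$2$ vertex on it. Then $s\leadsto u_1\to r\to v_2\leadsto t$ need not be shorter than $\mu$, and no contradiction follows.
\end{itemize}
Your remedies (choosing $a,b$ well, exploiting freedom in the averaging step) come with no argument, and they cannot work at all when $\deg(r)\le 1$.

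\textbf{The missing observation is that $r$'s degree need not be preserved.} A $\mathrm{Degree}(r)$ query, or any query returning an edge incident to $r$, is by definition an access of $r$, and that event is already in your union bound. Only the incidence lists of the other vertices must look unchanged. The paper achieves this as follows.
\begin{itemize}
\item If $u_1r\in E(G)$, lower its weight.
\item Otherwise, if $v_1r\in E(G)$, use the path $u_1v_1r$ and lower both weights.
\item Otherwise, replace $u_1v_1$ by the two new edges $u_1r$ and $rv_1$, giving $u_1r$ a small weight.
\item Do the same symmetrically on the backward side with $u_2v_2$, $rv_2$ and $ru_2$.
\end{itemize}
New edges are added only when absent, so $G'$ stays simple. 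Nothing at $r$ is deleted, and the degrees of $u_1,v_1,u_2,v_2$ are unchanged. Every altered oracle answer is revealed only by querying $u_1v_1$, querying $u_2v_2$, or accessing $r$.

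You also still need to handle the case where $\{u_1,v_1\}$ and $\{u_2,v_2\}$ share a vertex. The paper treats it separately by directly lowering the weights of the one or two involved edges, without using $r$. Your Steps 1 and 3 are fine and match the paper.
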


\begin{proof}
Let $\mu=d(s,t)$. We prove the contrapositive. Suppose that
\(
T_A(G,s,t)\le S/16.
\)
Since the relaxed bidirectional Dijkstra execution alternates edge explorations, we have
\(
|E_s|=|E_t|\pm 1.
\)
Thus, for all nontrivial instances, \(|E_s| \ge S/3\) and \(|E_t|\ge S/3\).
Hence the average probability with which $A$ queries an edge of $E_s$ is at most
\[
\frac{T_A(G,s,t)}{|E_s|}
\le
\frac{S/16}{S/3}
=
\frac{3}{16}
<
0.25.
\]
Therefore there exists an edge $u_1v_1\in E_s$, explored from $u_1$ by the forward execution, such that
\(
\Pr[A\text{ queries }u_1v_1]\le 0.25.
\)
Similarly, there exists an edge $u_2v_2\in E_t$, explored from $v_2$ by the backward execution, such that
\(
\Pr[A\text{ queries }u_2v_2]\le 0.25.
\)
By Lemma~\ref{explored edge distance} we have
\(
d(s,u_1)+d(v_2,t)<\mu.
\)
Let
\[
\Delta=\mu-d(s,u_1)-d(v_2,t)>0.
\]

We first dispose of the case in which the two chosen edges are not vertex-disjoint. If $u_1=v_2$, then
\[
d(s,u_1)+d(u_1,t)<\mu,
\]
which is impossible. Hence any overlap gives a path from $u_1$ to $v_2$ using one or both of the edges $u_1v_1$ and $u_2v_2$. Indeed, if $v_1=v_2$, then the edge $u_1v_1$ connects $u_1$ to $v_2$; if $u_1=u_2$, then the edge $u_2v_2$ connects $u_1$ to $v_2$; and if $v_1=u_2$, then the two-edge path
\[
u_1-v_1-u_2-v_2
\]
connects $u_1$ to $v_2$ after identifying $v_1=u_2$.

In this case, construct $G'$ by lowering the weights of the one or two involved edges so that the resulting path from $u_1$ to $v_2$ has total length less than $\Delta$. Then $G'$ contains an $st$-path of length strictly smaller than $\mu$. The modification can be discovered only if $A$ queries $u_1v_1$ or $u_2v_2$. Therefore, by the union bound, the probability that $A$ discovers the modification is at most
\(
0.25+0.25=0.5.
\)
With probability at least $0.5$, $A$ has not queried any edge revealing the new shorter path. Coupling the executions of $A$ on $G$ and $G'$ with the same random choices, on this event the transcript is identical and $A$ returns the same output on both graphs. Since the shortest $st$-path length is different in $G$ and $G'$, $A$ is incorrect on at least one of the two inputs with probability at least $0.25>0.1$, contradicting correctness. Thus we may assume from now on that the four vertices $u_1,v_1,u_2,v_2$ are distinct.

We now construct a shortcut through the low-access vertex $r$. The construction has two independent parts: one connecting $u_1$ to $r$, and one connecting $r$ to $v_2$.

First we describe the forward connector from $u_1$ to $r$ using the edge $u_1v_1$.

If $r=u_1$, no forward modification is needed, and the connector from $u_1$ to $r$ has length $0$.

If $r=v_1$, we use the edge $u_1v_1$ as the connector and lower its weight.

Assume now that $r\notin{u_1,v_1}$. If $u_1r\in E(G)$, we use the existing edge $u_1r$ and lower its weight. If $u_1r\notin E(G)$ but $v_1r\in E(G)$, we use the path
\[
u_1-v_1-r
\]
and lower the weights of $u_1v_1$ and $v_1r$. Finally, if neither $u_1r$ nor $v_1r$ is an edge of $G$, we remove the edge $u_1v_1$ and add the two edges $u_1r$ and $rv_1$. The edge $u_1r$ is used in the new shortcut and is assigned a small positive weight, while $rv_1$ is assigned an arbitrary positive weight. This preserves the degrees of $u_1$ and $v_1$; only the degree and incidence list of $r$ may change.

In every forward case, we obtain a path from $u_1$ to $r$ whose total weight can be made arbitrarily small. Moreover, every modified oracle answer outside the incidence list of $r$ is contained in the original edge $u_1v_1$.

We define the backward connector from $r$ to $v_2$ symmetrically using the edge $u_2v_2$, which was explored from $v_2$ by the backward execution.

If $r=v_2$, no backward modification is needed, and the connector from $r$ to $v_2$ has length $0$.

If $r=u_2$, we use the edge $u_2v_2$ as the connector and lower its weight.

Assume now that $r\notin{u_2,v_2}$. If $rv_2\in E(G)$, we use the existing edge $rv_2$ and lower its weight. If $rv_2\notin E(G)$ but $ru_2\in E(G)$, we use the path
\[
r-u_2-v_2
\]
and lower the weights of $ru_2$ and $u_2v_2$. Finally, if neither $rv_2$ nor $ru_2$ is an edge of $G$, we remove the edge $u_2v_2$ and add the two edges $rv_2$ and $ru_2$. The edge $rv_2$ is used in the new shortcut and is assigned a small positive weight, while $ru_2$ is assigned an arbitrary positive weight. This preserves the degrees of $u_2$ and $v_2$; only the degree and incidence list of $r$ may change.

Again, in every backward case, we obtain a path from $r$ to $v_2$ whose total weight can be made arbitrarily small. Moreover, every modified oracle answer outside the incidence list of $r$ is contained in the original edge $u_2v_2$.

Combining the forward and backward connectors, we obtain in $G'$ a path of the form
\[
s\leadsto u_1 \leadsto r \leadsto v_2 \leadsto t.
\]
Choose the modified positive edge weights so that the total length of the connector from $u_1$ to $v_2$ through $r$ is less than $\Delta$. Then
\[
d(s,u_1)
+
d_{G'}(u_1,r)
+
d_{G'}(r,v_2)
+
d(v_2,t)
<
d(s,u_1)+\Delta+d(v_2,t)
=
\mu.
\]
Thus $G'$ has an $st$-path strictly shorter than the shortest $st$-path in $G$.

We now bound the probability that $A$ discovers the modification. By construction, every changed oracle answer is revealed only if $A$ queries $u_1v_1$, queries $u_2v_2$, or accesses the vertex $r$. Therefore
\[
\begin{aligned}
\Pr[A\text{ discovers the modification}]
&\le
\Pr[A\text{ queries }u_1v_1]
+
\Pr[A\text{ queries }u_2v_2]
+
\Pr[A\text{ accesses }r] \\
&\le
0.25+0.25+0.25
=
0.75.
\end{aligned}
\]
Consequently, with probability at least $0.25$, $A$ does not query any oracle entry that reveals the new shorter path. Coupling the executions of $A$ on $G$ and $G'$ using the same random choices, on this event the transcript seen by $A$ is the same on the two graphs. Therefore $A$ returns the same output on $G$ and $G'$, while the correct shortest $st$-path length is different. Hence $A$ is incorrect on at least one of $G$ and $G'$ with probability at least
\[
\frac{0.25}{2}
=
0.125
>
0.1.
\]
This contradicts the assumption that $A$ is correct with probability at least $0.9$ on every input.

Therefore the assumption
\(
T_A(G,s,t)\le S/16
\)
is impossible. Hence
\[
T_A(G,s,t)>\frac{|E_s|+|E_t|}{16},
\]
up to changing the universal constant to handle finitely many trivial small instances. This proves the theorem.
\end{proof}

\begin{corollary}\label{cor:large_vertex_reservoir}
Consider a class of input instances $\mathcal I$ for the shortest $st$-path problem such that for every $(G,s,t)\in \mathcal I$, the execution of the \BID satisfies
\[
|V(G)|>\frac{|E_s|+|E_t|}{16},
\]
where $E_s$ and $E_t$ are the sets of edges explored by the forward and backward executions, respectively. Then \BID is instance optimal up to a constant factor on $\mathcal I$.
\end{corollary}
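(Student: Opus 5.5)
The plan is to derive the corollary directly from Theorem~\ref{thm:reservoir_node}, using a simple counting argument to produce a vertex that is rarely accessed. Fix an instance $(G,s,t)\in\mathcal I$ and set $S=|E_s|+|E_t|$. Let $A$ be any correct algorithm. We need to show $T_A(G,s,t)\ge c\cdot S$ for a universal constant $c>0$. Since \BID performs $O(S)$ queries, this gives instance optimality on $\mathcal I$. Correctness of \BID on simple graphs is covered by the correctness part of Theorem~\ref{omission_optimality}, because simple graphs are a special case of multigraphs.

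\textbf{Case split.} If $T_A(G,s,t)\ge S/64$, we are done with $c=1/64$. Otherwise $T_A(G,s,t)<S/64$, and we will exhibit a vertex $r$ that $A$ accesses with probability at most $0.25$.

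\textbf{Finding $r$.} Each query (a \textbf{Degree} or \textbf{Edge} query) accesses at most two vertices: the queried node and, for an edge query, the other endpoint revealed. Let $X_v$ indicate that $v$ is accessed. Then
\[
\sum_{v\in V(G)}\Pr[X_v=1]\le 2\,T_A(G,s,t)<\frac{S}{32}.
\]
By the hypothesis of the class, $|V(G)|>S/16$, so the average access probability is below $\frac{S/32}{S/16}=\frac12$. This is not yet $0.25$, so the constants must be tuned. The fix is to assume $T_A<S/128$; the same computation then gives average access probability below $1/4$. Hence some vertex $r$ satisfies $\Pr[A\text{ accesses } r]\le 0.25$. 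Theorem~\ref{thm:reservoir_node} now yields $T_A(G,s,t)>S/16$, which contradicts $T_A<S/128$. Therefore in every case $T_A(G,s,t)\ge S/128$.

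\textbf{Main obstacle.} The delicate point is matching constants. The definition of an accessed node counts both endpoints of a queried edge, which produces the factor $2$. The hypothesis $|V(G)|>S/16$ must still leave an averaging bound of at most $0.25$, which is why the threshold is $S/128$ rather than $S/16$. A smaller technical point is that Theorem~\ref{thm:reservoir_node} absorbs finitely many trivial small instances into its constant. I would handle these in the same way: on such instances $S=O(1)$, and any correct algorithm makes at least one query, so only the universal constant changes. Combining the two cases shows that \BID is within a constant factor of every correct algorithm on $\mathcal I$.
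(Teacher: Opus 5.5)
Your argument is correct and is essentially the paper's proof. The paper splits on whether every vertex is accessed with probability greater than $0.25$ (then $T_A\ge |V(G)|/8>S/128$ by the two-vertices-per-query count) or some vertex is not (then Theorem~\ref{thm:reservoir_node} applies), which is your averaging argument in contrapositive form, and it likewise ends with the constant $\min\{\alpha,1/128\}$, matching your corrected threshold $S/128$.
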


\begin{proof}
Let
\(
S=|E_s|+|E_t|.
\)
Let $A$ be any randomized algorithm that is correct with probability at least $0.9$ on every input. We show that $A$ must perform $\Omega(S)$ queries on every instance in $\mathcal I$.

The maximum number of vertices that can be accessed by a single query is $2$.

First suppose that every vertex of $G$ is accessed by $A$ with probability greater than $0.25$. Then the expected number of accessed vertices is greater than
\(
0.25|V(G)|.
\)
Since each query accesses at most $2$ vertices, it follows that
\[
T_A(G,s,t)
\ge
\frac{0.25|V(G)|}{2}
=
\frac{|V(G)|}{8}.
\]
Using the assumption $|V(G)|>S/16$, we obtain
\[
T_A(G,s,t)
>
\frac{S}{128}.
\]

On the other hand, suppose there exists a vertex $u\in V(G)$ that is accessed by $A$ with probability at most $0.25$. Then by Theorem~\ref{thm:reservoir_node}, there exists a universal constant $\alpha>0$ such that
\(
T_A(G,s,t)>\alpha S.
\)

Combining the two cases, every correct algorithm $A$ satisfies
\[
T_A(G,s,t) > \min \left \{\alpha,\frac{1}{128} \right \}S.
\]
Thus every correct algorithm has query complexity $\Omega(S)$ on every instance in $\mathcal I$.

\BID{} performs $O(S)$ queries. Therefore \BID{} is instance optimal up to a constant factor on $\mathcal I$.
\end{proof} 

\begin{theorem}\label{thm:sparse_side_randomized}
Fix a constant $K=O(1)$. Consider the class of simple positively weighted instances $(G,s,t)$ for which the execution of \BID satisfies
\[
|E_s|\le K|N_s|,
\]
where $E_s,E_t$ are the edges explored by the forward and backward executions, respectively, and $N_s$ is the set of vertices accessed by the forward execution. Then \BID{} is instance optimal up to a constant factor on this class against randomized algorithms that are correct with probability at least $0.9$ on every input.
\end{theorem}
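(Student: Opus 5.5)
We reduce to the reservoir dichotomy used in Corollary~\ref{cor:large_vertex_reservoir}, with $N_s$ playing the role that $V(G)$ plays there. Set $S=|E_s|+|E_t|$. Because the two executions alternate edge relaxations, $|E_s|=|E_t|\pm 1$, so $S\le 2|E_s|+1\le 2K|N_s|+1$. Hence $|N_s|\ge (S-1)/(2K)$, and for nontrivial instances $|N_s|=\Omega(S/K)$. Let $A$ be any randomized algorithm that is correct with probability at least $0.9$ on every input. We split into two cases according to how often $A$ accesses the vertices of $N_s$.

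\emph{Case 1: every vertex $r\in N_s$ is accessed by $A$ with probability greater than $0.25$.} By linearity of expectation, the expected number of vertices of $N_s$ that $A$ accesses exceeds $0.25|N_s|$. A single query accesses at most two vertices, so
\[
T_A(G,s,t)\ge \tfrac{1}{8}|N_s|\ge \tfrac{S-1}{16K}=\Omega(S).
\]

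\emph{Case 2: some vertex $r\in N_s$ is accessed by $A$ with probability at most $0.25$.} Then $r\in V(G)$, and Theorem~\ref{thm:reservoir_node} applies directly, giving $T_A(G,s,t)>\alpha S$ with $\alpha=1/16$. That theorem does not need $r$ to lie outside the explored region: it requires only that $A$ accesses $r$ with probability at most $0.25$. Its connector construction also handles the degenerate overlaps $r\in\{u_1,v_1,u_2,v_2\}$ explicitly.

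Combining the two cases, every correct $A$ satisfies $T_A(G,s,t)\ge \min\{1/16,\,1/(16K)\}\,S-O(1)$. Finitely many trivial small instances are absorbed into the constant, exactly as in Theorem~\ref{thm:reservoir_node}. Since \BID{} performs $O(S)$ queries, it is instance optimal on the class up to a factor $O(K)=O(1)$.

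The main point to check is Case~2: that invoking Theorem~\ref{thm:reservoir_node} with a vertex $r$ lying inside the forward-explored region is legitimate. In particular, the modification must be invisible to $A$ except through queries to $u_1v_1$, to $u_2v_2$, or accesses to $r$, and only the degree of $r$ may change. Checking the cases of that proof shows it never used $r\notin N_s$, so the argument goes through. A related subtlety is whether the sets $N_s$ and $E_s$ here refer to $G$ and not $G'$. Both are defined from \BID's run on the fixed instance $G$, so no issue arises.
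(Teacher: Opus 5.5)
Your proposal is correct and follows essentially the same route as the paper. Both arguments use the same dichotomy. If some vertex is accessed by $A$ with probability at most $0.25$, Theorem~\ref{thm:reservoir_node} applies, and it indeed places no restriction on where $r$ lies. If every vertex of $N_s$ is accessed with probability above $0.25$, the two-vertices-per-query counting gives $T_A\ge |N_s|/8$, which combined with $|E_s|\le K|N_s|$ finishes. The only differences are cosmetic: the paper splits on all of $V(G)$ rather than on $N_s$, and it uses $S\le 3|E_s|$ instead of your $S\le 2|E_s|+1$, giving the constant $1/(24K)$ rather than your $1/(16K)$.
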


\begin{proof}
Let $S=|E_s|+|E_t|$.
Since \BID execution alternates edge explorations, we have $|E_s|=|E_t|\pm 1$.
Thus, after absorbing finitely many trivial instances into the constant, we may assume $S\le 3|E_s|$. 
Using the assumption $|E_s|\le K|N_s|$, we get $S\le 3K|N_s|$.

Let $A$ be any randomized algorithm that is correct with probability at least $0.9$ on every input. We show that $A$ has expected query complexity $\Omega_K(S)$ on every instance in the class.

There are two cases.

First, suppose that there exists a vertex $u\in V(G)$ that $A$ accesses with probability at most $0.25$. Then by Theorem~\ref{thm:reservoir_node}, there exists a universal constant $\alpha>0$ such that $T_A(G,s,t)>\alpha S$.

Second, suppose that every vertex of $G$ is accessed by $A$ with probability greater than $0.25$. Since $N_s\subseteq V(G)$, every vertex of $N_s$ is also accessed by $A$ with probability greater than $0.25$. Hence the expected number of accessed vertices in $N_s$ is greater than $0.25|N_s|$. Each query accesses at most two vertices. Therefore
\[
T_A(G,s,t)
\ge
\frac{0.25|N_s|}{2}
=
\frac{|N_s|}{8}.
\]
Since $|E_s|\le K|N_s|$, we have $|N_s|\ge \frac{|E_s|}{K}$.
Therefore
\(
T_A(G,s,t)
\ge
|E_s|/(8K).
\)
Using $S\le 3|E_s|$, we obtain
\(
T_A(G,s,t)
\ge S/(24K).
\)

Combining the two cases, every randomized $0.9$-correct algorithm $A$ satisfies
\[
T_A(G,s,t)
\ge
\min\left \{\alpha,\frac{1}{24K}\right \}S.
\]
Thus $T_A(G,s,t)=\Omega_K(|E_s|+|E_t|)$. \BID{} performs $O(|E_s|+|E_t|)$ queries. Since $K=O(1)$ is fixed on the class, the lower bound above is a constant-factor lower bound. Hence \BID is instance optimal up to a constant factor on the class of instances satisfying $|E_s|\le K|N_s|$.
\end{proof}

\begin{lemma}\label{lem:at_most_one_boundary_pair}
Let $G=(V,E)$ be a multigraph with positive edge weights. Consider Algorithm~\ref{BID_DIJ_AUTH}, with the condition on Line~\ref{line: stop condition} relaxed by omitting the requirement that the neighboring vertex is closed in the backward execution. Let $\mu$ denote the length of the shortest $st$-path. Then the algorithm can close at most one distinct pair of vertices $p_s,p_t$, where $p_s$ is closed by the forward execution and $p_t$ is closed by the backward execution, satisfying
\[
d(s,p_s)+d(p_t,t)\ge \mu .
\]
More precisely, when the last of $p_s$ and $p_t$ is closed the algorithm terminates.

Equivalently, if two such pairs $(p_s^1,p_t^1)$ and $(p_s^2,p_t^2)$ satisfy
\[
d(s,p_s^1)+d(p_t^1,t)\ge \mu
\qquad\text{and}\qquad
d(s,p_s^2)+d(p_t^2,t)\ge \mu,
\]
then either $p_s^1=p_s^2$ or $p_t^1=p_t^2$.
\end{lemma}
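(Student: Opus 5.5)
The plan is to reuse the argument from the proof of Lemma~\ref{explored edge distance}, which already establishes the key step. If $p_s$ and $p_t$ are closed with $d(s,p_s)+d(p_t,t)\ge\mu$, then the final update of $\mu$ has been performed by the time the second of the two is closed. Since $\hat d(s,p_s)=d(s,p_s)$ and $\hat d(p_t,t)=d(p_t,t)$ once each node is closed, this makes the stopping condition on Line~\ref{line: stoping conditon} hold at that moment. Consequently the algorithm terminates as soon as the last of $p_s,p_t$ is closed, which is the ``more precisely'' part of the statement.

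\textbf{Step 1: the final value of $\mu$ is already in place.} Assume without loss of generality that the final update of $\mu$ comes from the forward execution, namely $\mu\gets d(s,u^s_\mu)+\ell(u^s_\mu,m)+\hat d(m,t)$. I will split into the same two cases as in Lemma~\ref{explored edge distance}.
\begin{itemize}
\item If $d(s,p_s)>d(s,u^s_\mu)$, then $u^s_\mu$ was closed and explored before $p_s$, since nodes are closed in non-decreasing order of distance. So the update has already happened when $p_s$ is closed.
\item If $d(s,p_s)\le d(s,u^s_\mu)$, then $d(p_t,t)\ge \ell(u^s_\mu,m)+d(m,t)>d(m,t)$, using positive weights. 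Hence $m$ is closed and explored in the backward execution before $p_t$, and Lemma~\ref{m closed} gives that $\mu$ already has its final value.
\end{itemize}
One subtlety I must handle is ties in the closing order. With strict inequality $d(s,p_s)>d(s,u^s_\mu)$ there is no ambiguity. In the other case I have a strict inequality $d(p_t,t)>d(m,t)$, so that case is safe too.

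\textbf{Step 2: termination.} Closing the later of the two sets $u_s=p_s$ or $u_t=p_t$, while the other pointer already refers to a node whose distance is at least as large as that of the earlier of $p_s,p_t$. Here I need to check that the pointer $u_s$ or $u_t$ of the opposite execution has distance at least $d(s,p_s)$ (respectively $d(p_t,t)$). This holds because $u_s$ is the most recently closed forward node, and closing distances are monotone.

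The check therefore reads $\hat d(s,u_s)+\hat d(u_t,t)\ge d(s,p_s)+d(p_t,t)\ge\mu$. This fires the termination on Line~\ref{line: termination} before any further node is closed.

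\textbf{Step 3: the equivalent formulation.} Suppose there are two pairs with $p_s^1\ne p_s^2$ and $p_t^1\ne p_t^2$. Order the four closing events in time. Take whichever pair completes first, meaning the last of its two nodes is closed earliest. By Step 2 the algorithm halts at that moment, so some node of the other pair is never closed. That contradicts both pairs being closed.

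The main obstacle is Step 1, specifically making the application of Lemma~\ref{m closed} airtight. I expect the rest to be routine bookkeeping about monotonicity of the closing order.
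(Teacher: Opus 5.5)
Your proposal is correct and follows essentially the same route as the paper. It uses the same case split on whether $d(s,p_s)>d(s,u^s_\mu)$, invoking Lemma~\ref{m closed} in the second case via $d(p_t,t)>d(m,t)$; the same termination inequality $\hat d(s,u_s)+\hat d(u_t,t)\ge d(s,p_s)+d(p_t,t)\ge\mu$ at the later closing; and the same ``first completed pair'' contradiction. In Step~3 you should state explicitly that the two pairs cannot be completed by the same closing event: that event closes a single node in a single execution, and the pairs have distinct forward endpoints and distinct backward endpoints.
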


\begin{proof}
Suppose, for the sake of contradiction, that the algorithm closes two pairs of vertices \((p_s^1,p_t^1)\) and \((p_s^2,p_t^2)\)
where $p_s^1,p_s^2$ are closed by the forward execution, $p_t^1,p_t^2$ are closed by the backward execution, $p_s^1\neq p_s^2$, and $p_t^1\neq p_t^2$, such that
\(
d(s,p_s^1)+d(p_t^1,t)\ge \mu
\)
and
\(
d(s,p_s^2)+d(p_t^2,t)\ge \mu .
\)
Without loss of generality, assume that the final update of $\mu$ occurs in the forward execution through an edge $(u_\mu^s,m)$, so that
\[
\mu
=
d(s,u_\mu^s)+\ell(u_\mu^s,m)+\hat d(m,t).
\]

Let $(p_s,p_t) \in \{(p_s^1,p_t^1),(p_s^2,p_t^2)\}$ be the pair in which both of the nodes close first.
Since the two original pairs have distinct forward endpoints and distinct backward endpoints, at least one node of the pairs remains to be closed after both $p_s$ and $p_t$ have been closed. Now we will show that the algorithm terminates when the later of $p_s$ and $p_t$ is closed.

By assumption, the pair $(p_s,p_t)$ satisfies
\(
    d(s,p_s)+d(p_t,t)\ge \mu.
\)
Substituting the final value of $\mu$, we obtain
\[
    d(s,p_s)+d(p_t,t)
    \ge
    d(s,u_\mu^s)+\ell(u_\mu^s,m)+\hat d(m,t).
\]

We now distinguish two cases.

First, suppose that
\(
    d(s,p_s)>d(s,u_\mu^s).
\)
Since vertices are closed in nondecreasing distance order in the forward execution, $u_\mu^s$ is closed and explored before $p_s$ is closed. Hence the final update of $\mu$ has already occurred by the time $p_s$ is closed. Therefore, when the later of $p_s$ and $p_t$ is closed, the final value of $\mu$ has already been set. Since $p_s$ and $p_t$ are closed in the forward and backward executions respectively, the current vertices $u_s$ and $u_t$ satisfy
\[
    \hat d(s,u_s)+\hat d(u_t,t)
    \ge
    d(s,p_s)+d(p_t,t)
    \ge
    \mu.
\]
Thus the algorithm terminates by the stopping condition on Line~\ref{line: termination}.

Second, suppose that
\(
    d(s,p_s)\le d(s,u_\mu^s).
\)
Then the inequality above implies
\[
    d(p_t,t)
    \ge
    d(s,u_\mu^s)-d(s,p_s)
    +
    \ell(u_\mu^s,m)+\hat d(m,t)
    \ge
    \ell(u_\mu^s,m)+\hat d(m,t).
\]
Since edge weights are positive, this gives
\(
    d(p_t,t)>\hat d(m,t).
\)
Hence, in the backward execution, the vertex $m$ is closed before $p_t$. By Lemma~\ref{m closed}, once $m$ has been closed, the final update of $\mu$ must already have occurred. Therefore, by the time $p_t$ is closed, the final value of $\mu$ has already been set. As before, when the later of $p_s$ and $p_t$ is closed, we have
\[
    \hat d(s,u_s)+\hat d(u_t,t)
    \ge
    d(s,p_s)+d(p_t,t)
    \ge
    \mu,
\]
and so the algorithm terminates by the stopping condition on Line~\ref{line: termination}.

In both cases, the algorithm terminates when the later of $p_s$ and $p_t$ is closed. This prevents it from closing the remaining endpoint of the second pair, contradicting the assumption that two pairs with distinct forward and backward endpoints were both closed. Therefore, any two pairs satisfying the inequality must share their forward endpoint or their backward endpoint.

\end{proof}

In the remainder of this paper we will provide a general proof strategy for instance optimality in simple graphs. The key observation is quite simple and is described in the following lemma.

\begin{lemma}\label{lem: cross edge queries}
    Consider executing \BID on some simple graph $G$ with positive edge weights, on inputs $s,t$. Let $A$ be any correct algorithm for the shortest $st$-path problem. Let $G_s$ and $G_t$ be the two subgraphs of $G$ obtained by taking the nodes and edges that are explored in the forward and backward execution respectively, with the removal of the edges that $A$ queries with probability greater than $0.25$. More precisely, define
    \[
        E(G_s)=\{e\in E_s:\Pr[A\text{ queries }e]\le1/4\}, \quad E(G_t)=\{e\in E_t:\Pr[A\text{ queries }e]\le1/4\}
    \]
    and
    \[
        V(G_s)=\{v: \exists e\in E(G_s)\text{ incident to }v\}, \quad V(G_t)=\{v: \exists e\in E(G_t)\text{ incident to }v\}.
    \]
    Let $X=V(G_s)$, $Y=V(G_t)$, and let $m=e(X,Y)$ be the number of cross-edges with one endpoint in $X$ and the other in $Y$. Then for every edge $e\in E(X,Y)$, $A$ must query $e$ with probability at least $0.25$. Consequently $T_A(G,s,t)\ge m/4$.
\end{lemma}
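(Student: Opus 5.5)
The plan is a weight-only perturbation argument, in the spirit of the proof of Theorem~\ref{omission_optimality} and Theorem~\ref{thm:reservoir_node}. The point is that we never add or remove an edge, so $G'$ stays simple with the same degrees and incidence lists. Fix a cross edge $xy\in E(X,Y)$ with $x\in X$ and $y\in Y$; in the directed case, read $E(X,Y)$ as the edges oriented from $X$ to $Y$. Suppose for contradiction that $\Pr[A\text{ queries }xy]<1/4$.

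By definition of $X$ there is an edge $u_1v_1\in E(G_s)$, explored from $u_1$ by the forward execution, with $x\in\{u_1,v_1\}$ and $\Pr[A\text{ queries }u_1v_1]\le 1/4$. Symmetrically, there is an edge $u_2v_2\in E(G_t)$, explored from $v_2$ by the backward execution, with $y\in\{u_2,v_2\}$ and query probability at most $1/4$. Lemma~\ref{explored edge distance} gives
\[
\Delta:=\mu-d(s,u_1)-d(v_2,t)>0 .
\]

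Next we build a connector from $u_1$ to $v_2$ inside $G$. If $x=u_1$ the first piece is empty; if $x=v_1$ it is the edge $u_1v_1$. Then comes the edge $xy$. Finally, if $y=v_2$ the last piece is empty; otherwise it is the edge $u_2v_2$. This gives a walk $u_1\leadsto x\to y\leadsto v_2$ that uses only the existing edges $u_1v_1$, $xy$ and $u_2v_2$. Some of these may coincide, and the walk may repeat vertices, but that is harmless.

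Let $G'$ be $G$ with the weights of these at most three edges lowered so that the walk has total length below $\Delta$. Since the forward and backward executions explore edges out of $s$ and into $t$ respectively, the orientations match in the directed case. Then $G'$ contains an $st$-walk of length
\[
d(s,u_1)+(\text{connector})+d(v_2,t)<\mu ,
\]
so $d_{G'}(s,t)<d_G(s,t)$ and the correct outputs differ.

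The only oracle answers that differ between $G$ and $G'$ are the \textbf{Edge} queries returning $u_1v_1$, $xy$ or $u_2v_2$, since these carry the weights. Degree answers are unchanged. By the union bound, the probability that $A$ sees any of these answers is less than $1/4+1/4+1/4=3/4$. Coupling the runs of $A$ on $G$ and $G'$ with the same randomness, with probability greater than $1/4$ the transcripts, and hence the outputs, are identical. So $A$ errs on one of the two inputs with probability greater than $1/8>0.1$, which contradicts correctness.

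The step needing the most care is the degenerate overlaps: $xy$ coinciding with $u_1v_1$ or $u_2v_2$, or $x=y$-type coincidences among the four endpoints. In each case the connector is simply shorter and the same argument goes through, using only existing edges, so simplicity of $G'$ is never at stake. This is exactly why cross edges avoid the multigraph problem.

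Finally, linearity of expectation gives
\[
T_A(G,s,t)\ \ge\ \sum_{e\in E(X,Y)}\Pr[A\text{ queries }e]\ \ge\ \frac{m}{4}.
\]
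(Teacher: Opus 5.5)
Your proof is correct and is essentially the paper's argument: you lower the weights of the cross edge $xy$ and of the at most two low-query explored edges that link $x$ and $y$ to closed anchors, use the strict distance inequality at the anchors to obtain a strictly shorter $st$-path without changing any incidence list, and conclude with the $3/4$ union bound, the coupling, and linearity of expectation. The one difference is your choice of anchors. You anchor at the endpoint $u_1$ (resp.\ $v_2$) from which a defining low-query edge was actually explored and apply Lemma~\ref{explored edge distance} directly, whereas the paper sets $x'=x$ whenever $x$ is closed and appeals to Lemma~\ref{lem:at_most_one_boundary_pair}. Your version is slightly more careful: a vertex closed at the moment of termination has explored no edge, so it need not satisfy the required strict inequality.
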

\begin{proof}
    We will prove the claim by contradiction. Assume $A$ queries some edge $xy\in E(X,Y)$ with probability less than $0.25$. Both $x$ and $y$ are accessed by \BID. Since they are part of the explored edges in their respective executions, they are either closed or connected to closed nodes in $G_s$ and $G_t$. Let $x'$ be $x$ if is $x$ is closed, and the closed neighbor in $X$ that explored $x$ otherwise, define $y'$ similarly. Let $\mu = d(s,t)$. We claim $\mu > d(s,x')+d(y',t)$. Indeed, if $d(s,x')+d(y',t)\ge \mu$, Lemma~\ref{lem:at_most_one_boundary_pair} would imply that \BID{} terminates when the later of $x'$ and $y'$ is closed, before exploring any edge from that vertex. This contradicts the choice of $x'$ and $y'$, since each is the closed endpoint of an edge in $G_s$ or $G_t$. Hence we can construct $G'$ by appropriately decreasing the edge weights of $xy$, and if present $x'x$ and $yy'$, such that we have 
    \[
    \mu > d(s,x')+d'(x',x)+\ell(x,y)+d'(y,y')+d(y',t).
    \]

    Clearly this produces a new shortest path in $G'$ that $A$ can discover only if the edges $x'x$, or $xy$, or $yy'$ are queried. The probability that any one of them is queried by $A$ is at most
    \[
    \Pr[A \text{ queries }x'x]+\Pr[A \text{ queries }xy]+\Pr[A \text{ queries }yy'] \le 0.75
    \]
    hence we have 
    \[
    \Pr[A \text{ does not query }x'x \text{ and }A \text{ does not query }xy \text{ and }A \text{ does not query }yy'] \ge 0.25.
    \]
    Therefore with probability at least $0.25$ $A$ will produce the same answer on both $G$ and $G'$, hence it will be incorrect on at least one of them with probability at least $0.125$ contradicting the assumption that $A$ is correct.
\end{proof}

\begin{definition}[Open and closed nodes in explored edges] \label{def: open/closed nodes in edges}
    Consider executing \BID on some graph $G$ with positive edge weights, on inputs $s,t$. Let $uv\in E(G)$ be an edge queried by the algorithm. Assume the edge $uv$ was explored from the closed node $u$ as $u \rightarrow v$. We say that the node $u$ is closed and node $v$ is open with respect to the queried edge $uv$.
\end{definition}

Note that the notion of open and closed nodes as per Definition~\ref{def: open/closed nodes in edges} is not the same as open and closed nodes by Algorithm~\ref{BID_DIJ_AUTH} in general. A node might be closed by the algorithm, however, with respect to the edge that closed it, we would consider it to be open.

\begin{lemma} \label{lem: required cross edges}
    Consider executing \BID on some simple graph $G$ with positive edge weights, on inputs $s,t$. Let $A$ be any correct algorithm for the shortest $st$-path problem. Let $G_s$ and $G_t$ be the two subgraphs of $G$ obtained by taking the nodes and edges that are explored in the forward and backward execution respectively, with the removal of the edges that $A$ queries with probability more than $0.25$. More precisely, define
    \[
        E(G_s)=\{e\in E_s:\Pr[A\text{ queries }e]\le1/4\}, \quad E(G_t)=\{e\in E_t:\Pr[A\text{ queries }e]\le1/4\}
    \]
    and
    \[
        V(G_s)=\{v: \exists e\in E(G_s)\text{ incident to }v\}, \quad V(G_t)=\{v: \exists e\in E(G_t)\text{ incident to }v\}.
    \]
    Let $X=V(G_s)$ and $Y=V(G_t)$. Then for every pair of nodes $x,y\in X \times Y$ such that $xy \not \in E(X,Y)$ the following must hold
    \begin{enumerate}
        \item For every $xx'\in E(G_s)$ and every $yy'\in E(G_t)$ where $x$ and $y$ are closed, it must hold that $x'y'\in E(X,Y)$.
        \item For every $x'x\in E(G_s)$ and every $y'y\in E(G_t)$ where $x$ and $y$ are open, it must hold that $x'y'\in E(X,Y)$.
        \item For every $x'x\in E(G_s)$ and every $yy'\in E(G_t)$ where $x$ is open and $y$ is closed, it must hold that for every $x_N\in N_{G_s}(x)\setminus \{x'\}$ we have $x_Ny' \in E(X,Y)$.
    \end{enumerate}
    Note that the symmetric closed–open case is analogous and omitted.
\end{lemma}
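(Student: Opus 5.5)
Each of the three cases is an edge-swap argument in the style of Theorem~\ref{omission_optimality}, with simplicity of $G$ forced by using the missing pair $xy\notin E(X,Y)$ as the edge to create. The setup is the same in all cases. Fix $x\in X$, $y\in Y$ with $xy\notin E$, and pick edges $e_1\in E(G_s)$ incident to $x$ and $e_2\in E(G_t)$ incident to $y$. Each is queried by $A$ with probability at most $1/4$. Suppose for contradiction that the claimed cross edge (say $x'y'$) is absent from $E$. Remove $e_1,e_2$ and insert $xy$ together with the second edge of the swap. This keeps all degrees of the four endpoints, and $G'$ is still simple precisely because both inserted pairs were non-edges. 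The only oracle answers that change are those at the positions of $e_1$ and $e_2$. By the union bound, $A$ sees neither with probability at least $1/2$, and the coupling argument from the proof of Theorem~\ref{thm:reservoir_node} then gives an error probability of at least $1/4>0.1$. What remains in each case is to show that the inserted edge $xy$, with a small weight, yields an $st$-path shorter than $\mu$.

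\emph{Case 1} ($x,y$ closed with respect to $e_1=xx'$, $e_2=yy'$). Swap to $xy$ and $x'y'$. Both $x$ and $y$ explored edges, so by Lemma~\ref{lem:at_most_one_boundary_pair}, exactly as in the proof of Lemma~\ref{lem: cross edge queries}, we get $d(s,x)+d(y,t)<\mu$. One point needs checking: removing $xx'$ and $yy'$ must not increase $d(s,x)$ or $d(y,t)$. I would handle this by noting that edges explored \emph{from} $x$ do not lie on the settled shortest path to $x$. In the forward search, $x$ was closed before $xx'$ was relaxed, and similarly on the backward side. The only way an explored edge could sit on one of the relevant shortest paths is if it is a tree edge whose far endpoint is involved, and I would rule that out using the closing order.

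\emph{Case 2} ($x,y$ open with respect to $x'x$, $y'y$). Swap to $xy$ and $x'y'$. Now the relevant closed endpoints are $x'$ and $y'$. The path $s\leadsto x'\to x\to y\to y'\leadsto t$ uses $x'x$ and $y'y$, which are deleted. Therefore, instead of reweighting them, I would exploit the new edge $x'y'$ directly, giving it a small weight and routing $s\leadsto x'\to y'\leadsto t$. Here the cross pair being a non-edge is used for the edge we insert, and $xy$ plays the role of the harmless second edge. This route has length below $\mu$, again by Lemma~\ref{lem:at_most_one_boundary_pair} applied to $x',y'$.

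\emph{Case 3} ($x$ open via $x'x$, $y$ closed via $yy'$). Swapping $x'x,yy'$ into $xy,x'y'$ destroys the connection from $x'$ to $x$. So I would route through another explored forward neighbour $x_N\in N_{G_s}(x)\setminus\{x'\}$, using the swap of $x_Nx$ and $yy'$ into $xy$ and $x_Ny'$. The target path is $s\leadsto x_N\to y'\leadsto t$, or alternatively $s\leadsto x'\to x\to y\leadsto t$ with $x'x$ kept and $x_Nx$ removed. This is where I expect the main difficulty. It is delicate to choose which edge to sacrifice so that the shortcut survives, and to argue via Lemma~\ref{lem:at_most_one_boundary_pair} that the relevant endpoint pair (the closed one among $x_N,x'$ together with $y$) satisfies the strict inequality. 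I would first try keeping $x'x$ and $y$'s distance intact, using that $y$ is closed, and deleting $x_Nx$, $yy'$ to insert $xy$ and $x_Ny'$. Whichever of these two inserted pairs the conclusion asserts to be present must be the one whose absence we assume, so that $G'$ stays simple. The symmetric closed--open case follows by reversing roles.
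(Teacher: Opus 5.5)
Your Cases 1 and 2 follow the paper's proof. The paper uses the same swaps, $\{xx',yy'\}\to\{xy,x'y'\}$ and $\{x'x,y'y\}\to\{xy,x'y'\}$, and places the small-weight shortcut on the new edge between the two closed endpoints. Your extra check, that deleting the swapped edges does not lengthen the $s\leadsto x$ and $y\leadsto t$ paths, is a legitimate point the paper passes over. It holds if ``closed with respect to the edge'' is read as the first-exploration tail, together with the fact that a shortest $s$--$x$ path through $y$ would force $d(s,y)+d(y,t)<\mu$.

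The gap is Case 3. You leave it unresolved, and the uniform framework from your setup does not cover it.

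\begin{itemize}
\item \emph{Which shortcut.} With the swap $\{x_Nx,yy'\}\to\{xy,x_Ny'\}$, the only strict inequality available is $d(s,x')+d(y,t)<\mu$. It comes from Lemma~\ref{explored edge distance} applied to $x'$, the closed endpoint of $x'x$, and $y$, the closed endpoint of $yy'$. Your alternative route $s\leadsto x_N\to y'\leadsto t$ has no such guarantee, since neither $x_N$ nor $y'$ need be a closed endpoint. So $x_Ny'$ can only be the harmless second inserted edge, and the shortcut must be $s\leadsto x'\to x\to y\leadsto t$.
\item \emph{Keeping $x'x$ is not enough.} Keeping $x'x$ at its original weight does not suffice. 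Its head $x$ is open and possibly never closed, so $d(s,x')+\ell(x',x)+d(y,t)$ can be $\ge\mu$.
\item \emph{The missing step.} You must also lower the weight of $x'x$. This is allowed because $x'x\in E(G_s)$ is queried with probability at most $1/4$, and $x'x$ is not one of the deleted edges precisely because $x_N\ne x'$.
\item \emph{Consequence for the probability bound.} Lowering $x'x$ adds a third changed oracle entry, which contradicts your claim that only the positions of $e_1,e_2$ change. The union bound is then over $x'x$, $x_Nx$ and $yy'$, giving detection probability at most $3/4$. So the transcripts on $G$ and $G'$ coincide with probability at least $1/4$, and the error is at least $1/8$ rather than $1/4$. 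Since $1/8>0.1$, this still contradicts correctness, and it is exactly how the paper closes Case 3.
\end{itemize}
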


\begin{proof}
We will prove the claim by showing that if any case is not true, a degree-preserving edge switch, as the one used in the proof of Theorem~\ref{omission_optimality} can be made. This then allows one to construct an auxiliary graph $G'$ on which $A$ gives the wrong answer with sufficiently high probability. Recall that such a degree-preserving edge switch requires two edges $u_1v_1$ and $u_2v_2$ queried by the forward and backward execution respectively, but not queried (with a sufficiently high probability) by $A$, where $u_1$ and $v_2$ are closed in their respective edges. In multigraphs we can always add $u_1v_2$ and $v_1u_2$ however in simple graphs, such a switch requires both $u_1v_2$ and $v_1u_2$ to be absent. Now we will show that in each case, assuming it does not hold, such a degree-preserving switch exists thereby contradicting the assumption that $A$ is correct. Assume we have some pair $x,y\in X \times Y$ such that $xy \not \in E(X,Y)$. Note that in each following case, the closed endpoints of the selected forward and backward edges satisfy the strict inequality of Lemma~\ref{explored edge distance}, so after the switch the new edge weights can be chosen to create a strictly shorter $st$-path.
\begin{enumerate}
    \item Let $xx'\in E(G_s)$ and $yy'\in E(G_t)$ be such that $x$ and $y$ are closed. Assume $x'y'\not \in E(X,Y)$. Then clearly the absence of edges $xy$ and $x'y'$ allows for a degree-preserving switch. Further, the probability that $A$ does not query $xx'$ or $yy'$ is at least $0.5$, hence it will not be able to distinguish between $G$ and $G'$ with probability at least $0.5$. Therefore $A$ will produce the wrong answer on $G$ or $G'$ with probability at least $0.25>0.1$.
    \item Let $x'x\in E(G_s)$ and $y'y\in E(G_t)$ be such that $x$ and $y$ are open. Assume $x'y'\not \in E(X,Y)$. Then clearly the absence of edges $xy$ and $x'y'$ allows for a degree-preserving switch. Further, the probability that $A$ does not query $xx'$ or $yy'$ is at least $0.5$, hence it will not be able to distinguish between $G$ and $G'$ with probability at least $0.5$. Therefore $A$ will produce the wrong answer on $G$ or $G'$ with probability at least $0.25>0.1$.
    \item Let $x'x\in E(G_s)$ and $yy'\in E(G_t)$ be such that $x$ is open and $y$ is closed. Assume that for some $x_N\in N_{G_s}(x)\setminus \{x'\}$ we have $x_Ny' \not \in E(X,Y)$. Then we can make a slightly different degree preserving switch, namely we can remove the edge $xx_N$ and $yy'$ and replace them with $xy$ and $x_Ny'$. Degrees are clearly preserved. Now the only difference is that in $G'$ we must lower the edge weight of $x'x$ and set the edge weight of $xy$ low enough so that we have $d(s,t)>d(s,x')+\ell(x',x)+\ell(x,y)+d(y,t)$. Again, this produces a new shortest path in $G'$ that $A$ can only discover by querying the edges $xx'$, or $yy'$, or $xx_N$. By assumption the probability of these individual queries is at most $0.25$, hence the union probability is at most $0.75$. This finally allows us to conclude that $A$ will not be able to distinguish $G$ from $G'$ with probability at least $0.25$, therefore it will give the wrong answer on at least one instance with probability at least $0.125>0.1$.
\end{enumerate}
\end{proof}

\begin{lemma}\label{lem:low_query_intersection}
Consider the setup of Lemma~\ref{lem: required cross edges}. Then
\[
V(G_s)\cap V(G_t)=\emptyset.
\]
Consequently,
\[
E(G_s)\cap E(G_t)=\emptyset.
\]
\end{lemma}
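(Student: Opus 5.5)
We argue by contradiction. Suppose some vertex $w$ lies in $V(G_s)\cap V(G_t)$. By definition there is an edge $e_1\in E(G_s)$ incident to $w$ and an edge $e_2\in E(G_t)$ incident to $w$. Each is queried by $A$ with probability at most $1/4$. Let $x'$ be the closed endpoint of $e_1$ in the sense of Definition~\ref{def: open/closed nodes in edges}, and $y'$ the closed endpoint of $e_2$. As in the proof of Lemma~\ref{lem: cross edge queries}, Lemma~\ref{explored edge distance} (or Lemma~\ref{lem:at_most_one_boundary_pair}) gives
\[
\Delta:=\mu-d(s,x')-d(y',t)>0 .
\]
The goal is a path from $x'$ to $y'$ using only $e_1$ and $e_2$.

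If $e_1=e_2$, there is a single edge joining $x'$ and $y'$. Otherwise the two edges share $w$, so $x'$--$w$--$y'$ (with $w$ possibly equal to $x'$ or $y'$) is a walk of at most two edges, all among $e_1,e_2$. The case $x'=y'$ cannot occur, since then $d(s,x')+d(x',t)<\mu$, which is impossible.

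In $G'$ we lower the weights of the (at most two) edges on this walk, so that its total length is below $\Delta$. No edges are added, so $G'$ remains a simple graph with the same degrees and the same incidence lists; only weights change. This is the same overlap-case argument used in the proof of Theorem~\ref{thm:reservoir_node}. Then $G'$ has an $st$-path of length less than $\mu$, while every oracle answer differs only on $e_1$ or $e_2$. By the union bound, with probability at least $1/2$, $A$ queries neither edge. Under a coupling with shared randomness, $A$ then behaves identically on $G$ and $G'$, so it errs on one of them with probability at least $1/4>0.1$. This contradicts correctness.

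The consequence $E(G_s)\cap E(G_t)=\emptyset$ then follows immediately, since a common edge would have its endpoints in both vertex sets (alternatively, it is the subcase $e_1=e_2$ above).

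The main obstacle is the bookkeeping of which endpoint is closed, so that Lemma~\ref{explored edge distance} applies to $x'$ and $y'$. The subtle subcase is $w=x'$ being closed in both executions. Even then, the walk $x'\to y'$ uses only $e_2$, and the inequality for the pair $(x',y')$ still holds because $e_1$ was explored from $x'$ forward and $e_2$ from $y'$ backward. Directed graphs need the walk oriented from $x'$ to $y'$, which holds by how forward and backward explorations orient edges.
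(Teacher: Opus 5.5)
Your proof is correct and follows essentially the same route as the paper: a shared vertex gives two low-query incident edges, Lemma~\ref{explored edge distance} supplies the slack $\mu-d(s,x')-d(y',t)>0$, and lowering the weights of the at most two connector edges, combined with a union bound and coupling, yields the contradiction. The differences are minor streamlinings. You exclude $x'=y'$ directly from the strict inequality of Lemma~\ref{explored edge distance}, as the paper itself does in Theorem~\ref{thm:reservoir_node}, rather than via Lemma~\ref{lem:at_most_one_boundary_pair} and the simplicity of $G$. You also absorb the shared-edge case into the main argument instead of proving edge-disjointness first.
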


\begin{proof}
Recall that $G_s$ and $G_t$ contain only edges queried by $A$ with probability at most $1/4$.

We first show that
\(
E(G_s)\cap E(G_t)=\emptyset.
\)
Suppose, for contradiction, that there exists an edge $e=uv$ such that
\(
e\in E(G_s)\cap E(G_t).
\)
Let $c_s$ be the endpoint from which $e$ was explored by the forward execution, and let $c_t$ be the endpoint from which $e$ was explored by the backward execution.

We first note that $c_s\neq c_t$. Indeed, if $c_s=c_t$, then the same vertex is closed in both executions. Since
\(
d(s,c_s)+d(c_s,t)\ge \mu,
\)
Lemma~\ref{lem:at_most_one_boundary_pair} implies that when this vertex is closed by the second execution, BID terminates before exploring any edge from it. Hence $e$ could not be explored from $c_s$ in both executions, a contradiction.

Thus $c_s\neq c_t$. Since $e$ was explored from $c_s$ in the forward execution and from $c_t$ in the backward execution, Lemma~\ref{explored edge distance} gives
\(
d(s,c_s)+d(c_t,t)<\mu.
\)
Construct a graph $G'$ from $G$ by lowering the weight of $e$ to a positive value $\delta$ satisfying
\(
\delta<\mu-d(s,c_s)-d(c_t,t).
\)
Then $G'$ contains an $st$-path of length strictly smaller than $\mu$, namely the path obtained by concatenating a shortest $s c_s$-path, the edge $e$, and a shortest $c_t t$-path.

The only oracle answers changed between $G$ and $G'$ are those revealing the weight of $e$. Since $e\in E(G_s)\cap E(G_t)$, the edge $e$ is queried by $A$ with probability at most $1/4$. Therefore, with probability at least $3/4$, algorithm $A$ does not query any changed oracle entry. Coupling the executions of $A$ on $G$ and $G'$ using the same internal randomness, $A$ has the same transcript and returns the same output on both inputs on this event. Since the shortest $st$-path length differs between $G$ and $G'$, $A$ is incorrect on at least one of the two inputs with probability at least
\(
0.75\cdot 0.5
=
0.375
>
0.1.
\)
This contradicts the assumption that $A$ is correct with probability at least $0.9$ on every input. Hence
\(
E(G_s)\cap E(G_t)=\emptyset.
\)

It remains to show that
\(
V(G_s)\cap V(G_t)=\emptyset.
\)
Suppose, for contradiction, that there exists a vertex
\(
r\in V(G_s)\cap V(G_t).
\)
Since $r\in V(G_s)$, there exists an edge $e_s\in E(G_s)$ incident to $r$. Since $r\in V(G_t)$, there exists an edge $e_t\in E(G_t)$ incident to $r$. By the first part of the proof, $e_s\neq e_t$.

Let $c_s$ be the endpoint from which $e_s$ was explored by the forward execution, and let $c_t$ be the endpoint from which $e_t$ was explored by the backward execution. We claim that
\(
c_s\neq c_t.
\)
If $c_s=c_t=r$, then $r$ is closed in both executions. As above, Lemma~\ref{lem:at_most_one_boundary_pair} implies that BID terminates when $r$ is closed by the second execution, before exploring any edge from it. This contradicts the fact that both $e_s$ and $e_t$ are explored from $r$ in the two executions.

If $c_s=c_t\neq r$, then, because $G$ is simple and both $e_s$ and $e_t$ are incident to $r$, we must have
\(
e_s=e_t,
\)
contradicting $E(G_s)\cap E(G_t)=\emptyset$. Therefore $c_s\neq c_t$.

Since $e_s$ was explored from $c_s$ in the forward execution and $e_t$ was explored from $c_t$ in the backward execution, Lemma~\ref{explored edge distance} gives
\(
d(s,c_s)+d(c_t,t)<\mu.
\)
Let
\(
\Delta=\mu-d(s,c_s)-d(c_t,t)>0.
\)

We now modify the weights of the edges connecting $c_s$ to $r$ and $r$ to $c_t$. If $c_s=r$, the first connector is trivial and has length $0$; otherwise the first connector is the edge $e_s$. If $c_t=r$, the second connector is trivial and has length $0$; otherwise the second connector is the edge $e_t$.

Construct $G'$ by lowering the weights of the nontrivial connector edges among $e_s$ and $e_t$ so that their total length is less than $\Delta$. Then $G'$ contains an $st$-path of length strictly smaller than $\mu$:
\[
d(s,c_s)
+
d_{G'}(c_s,r)
+
d_{G'}(r,c_t)
+
d(c_t,t)
<
\mu.
\]

The only oracle answers changed between $G$ and $G'$ are those revealing the modified edges, which are contained in ${e_s,e_t}$. Both $e_s$ and $e_t$ belong to the low-query subgraphs, so
\[
\Pr[A\text{ queries }e_s]\le \frac{1}{4},
\qquad
\Pr[A\text{ queries }e_t]\le \frac{1}{4}.
\]
By the union bound,
\[
\Pr[A\text{ queries }e_s\text{ or }e_t]\le \frac{1}{2}.
\]
Hence, with probability at least $1/2$, algorithm $A$ does not query any changed oracle entry. Coupling the executions of $A$ on $G$ and $G'$ using the same internal randomness, $A$ has the same transcript and returns the same output on both inputs on this event. Since the shortest $st$-path length differs between $G$ and $G'$, $A$ is incorrect on at least one of the two inputs with probability at least
\(
0.5\cdot0.5
=
0.25
>
0.1.
\)
This contradicts the assumption that $A$ is correct with probability at least $0.9$ on every input. Therefore
\(
V(G_s)\cap V(G_t)=\emptyset.
\)

Finally, if an edge belonged to both $E(G_s)$ and $E(G_t)$, then its endpoints would belong to both $V(G_s)$ and $V(G_t)$, contradicting the vertex-disjointness just proved. Thus
\(
E(G_s)\cap E(G_t)=\emptyset.
\)
\end{proof}

\begin{theorem}[Tensor lower bound for queried cross edges]\label{thm:tensor_cross_edges}
Consider the setup of Lemma~\ref{lem: required cross edges}. Let \(X=V(G_s)\text{, }Y=V(G_t),\)
and write \(q=e(X,Y)\text{, } M_s=e(G_s)\text{, } M_t=e(G_t).\)
Let
\[
\Delta_s=\max_{x\in X} d_{G_s}(x),
\qquad
\Delta_t=\max_{y\in Y} d_{G_t}(y).
\]
Then
\[
\boxed{
q\ge \frac{M_sM_t}{\Delta_s\Delta_t}.
}
\]
Consequently, if \(M_s,M_t=\Omega(M)\) for some parameter \(M\), and \(\Delta_s\Delta_t=O(M),\) then \(q=e(X,Y)=\Omega(M).\)
In particular, if \(\Delta_s,\Delta_t=O(\sqrt M),\) then \(q=\Omega(M).\)
By Lemma~\ref{lem: cross edge queries}, this implies \(T_A(G,s,t)=\Omega(M).\)
\end{theorem}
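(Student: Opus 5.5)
The plan is a double-counting argument over pairs of low-query edges, one from \(G_s\) and one from \(G_t\). Each such pair will be charged to a cross edge in \(E(X,Y)\) produced by Lemma~\ref{lem: required cross edges}. Since \(X\cap Y=\emptyset\) by Lemma~\ref{lem:low_query_intersection}, the graphs \(G_s\) and \(G_t\) are vertex-disjoint, so every pair \((e_s,e_t)\in E(G_s)\times E(G_t)\) consists of two edges with four distinct endpoints. There are exactly \(M_sM_t\) such pairs.

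\textbf{Step 1: each pair yields a cross edge.} Fix \(e_s\in E(G_s)\) and \(e_t\in E(G_t)\). Let \(c_s\) be the closed endpoint of \(e_s\) in the sense of Definition~\ref{def: open/closed nodes in edges}, and \(o_s\) the open one; define \(c_t,o_t\) for \(e_t\) likewise.
\begin{itemize}
\item If \(c_sc_t\notin E(X,Y)\), apply case~2 of Lemma~\ref{lem: required cross edges} with \(x=o_s\), \(y=o_t\) if \(o_so_t\notin E(X,Y)\); this gives \(c_sc_t\in E(X,Y)\), a contradiction.
\item Hence, whenever \(c_sc_t\notin E(X,Y)\), we have \(o_so_t\in E(X,Y)\).
\end{itemize}
In all cases, at least one of the two edges \(c_sc_t\) and \(o_so_t\) lies in \(E(X,Y)\). (Case~1, applied with \(x=c_s\), \(y=c_t\), gives the same dichotomy from the other side.) This is the degree-preserving switch obstruction: the switch replacing \(e_s,e_t\) by \(c_sc_t\) and \(o_so_t\) must be blocked. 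So each pair \((e_s,e_t)\) is assigned a cross edge \(f(e_s,e_t)\in E(X,Y)\) whose endpoints are one endpoint of \(e_s\) and one endpoint of \(e_t\).

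\textbf{Step 2: bound the preimages.} Fix a cross edge \(xy\) with \(x\in X\) and \(y\in Y\). Any pair charged to \(xy\) has \(e_s\) incident to \(x\) and \(e_t\) incident to \(y\). There are at most \(d_{G_s}(x)\le\Delta_s\) choices of \(e_s\) and at most \(d_{G_t}(y)\le\Delta_t\) choices of \(e_t\), so at most \(\Delta_s\Delta_t\) pairs map to \(xy\). Therefore
\[
M_sM_t\le q\,\Delta_s\Delta_t ,
\]
which is the boxed inequality. The consequences then follow immediately:
\begin{itemize}
\item if \(M_s,M_t=\Omega(M)\) and \(\Delta_s\Delta_t=O(M)\), then \(q\ge \Omega(M^2)/O(M)=\Omega(M)\);
\item the square-root degree condition \(\Delta_s,\Delta_t=O(\sqrt M)\) is a special case of this;
\item Lemma~\ref{lem: cross edge queries} gives \(T_A\ge q/4\).
\end{itemize}

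\textbf{Main obstacle.} The delicate step is Step~1, in particular checking that the switch in Lemma~\ref{lem: required cross edges} applies to every pair. This needs the closed endpoints \(c_s,c_t\) to satisfy the strict inequality of Lemma~\ref{explored edge distance}, which holds because both edges are explored. It also needs the four endpoints to be distinct, which Lemma~\ref{lem:low_query_intersection} provides. With these in place, the two cases above cover every pair. Because of the mixed open/closed case~3, I will use only the closed–closed/open–open switch, which needs no neighborhood condition. This keeps the charging map well defined with multiplicity at most \(\Delta_s\Delta_t\).
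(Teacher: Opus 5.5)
Your proposal is correct and is essentially the paper's argument. The paper forms an auxiliary graph $H$ on $X\times Y$ with one edge $\{(x,y),(x',y')\}$ for each pair of oriented edges $x\to x'$ in $G_s$ and $y\to y'$ in $G_t$. It uses Item~1 of Lemma~\ref{lem: required cross edges} to show the present cross edges form a vertex cover of $H$, then applies $\tau(H)\ge e(H)/\Delta(H)$ with $\Delta(H)\le\Delta_s\Delta_t$. Your charging map and preimage count are exactly that vertex-cover bound unpacked, and your use of Item~2 instead of Item~1 gives the same dichotomy.

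One detail the paper makes explicit that you leave implicit: an edge of $G_s$ or $G_t$ may be explored from both endpoints. You should fix a single closed/open designation per edge (the paper uses the first-exploration orientation) so that the charging is well defined.
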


\begin{proof}
Orient every edge of \(G_s\) and \(G_t\) according to its first-exploration direction. That is, if an edge \(uv\) is first explored from \(u\) to \(v\), we orient it as \(u\to v.\)
Equivalently, the tail of the oriented edge is closed with respect to that edge, and the head is open with respect to that edge, in the sense of Definition~\ref{def: open/closed nodes in edges}. If an edge is later explored in the opposite direction, we keep the original first-exploration orientation.

We define an auxiliary graph \(H\) with vertex set \(V(H)=X\times Y.\)
For every oriented edge \(x\to x'\) in \(G_s\),
and every oriented edge \(y\to y'\) in \(G_t\),
put an edge in \(H\) between the vertices \((x,y)\) and \((x',y')\). Note that by Lemma~\ref{lem:low_query_intersection} we have that for every $u\in X\cup Y$ it follows $(u,u)\not\in V(H)$.

We claim that the set of missing cross edges
\[
\overline{E}(X,Y)
=
(X\times Y)\setminus E(X,Y)
\]
is an independent set in \(H\). Indeed, suppose not. Then there exist oriented edges  \(x\to x'\) in \(G_s\) and \(y\to y'\) in \(G_t\)
such that both \(xy\notin E(X,Y)\) and \(x'y'\notin E(X,Y)\).
But \(x\) is closed with respect to the edge \(xx'\), and \(y\) is closed with respect to the edge \(yy'\). Therefore, by Item~1 of Lemma~\ref{lem: required cross edges}, the missing cross edge \(xy\) forces
\(
x'y'\in E(X,Y),
\)
a contradiction.

Hence the missing cross edges form an independent set in \(H\). Therefore the present cross edges \(E(X,Y)\) form a vertex cover of \(H\). Thus
\(
q=e(X,Y)\ge \tau(H).
\)

Now we estimate \(\tau(H)\). The graph \(H\) has one edge for every pair consisting of one edge of \(G_s\) and one edge of \(G_t\). Hence \(e(H)=M_sM_t.\)
Moreover, for every \((x,y)\in X\times Y\),
\[
d_H(x,y)
\le d_{G_s}(x)d_{G_t}(y)
\le \Delta_s\Delta_t.
\]
Therefore \(\Delta(H)\le \Delta_s\Delta_t.\)
Using the elementary vertex-cover bound
\[
\tau(H)\ge \frac{e(H)}{\Delta(H)},
\]
we get
\[
q
\ge \tau(H)
\ge
\frac{e(H)}{\Delta(H)}
\ge
\frac{M_sM_t}{\Delta_s\Delta_t}.
\]

This proves the claimed tensor lower bound.

If \(M_s,M_t=\Omega(M)\) and \(\Delta_s\Delta_t=O(M)\), then
\[
q\ge
\frac{\Omega(M)\Omega(M)}{O(M)}
=
\Omega(M).
\]
In particular, \(\Delta_s,\Delta_t=O(\sqrt M)\) implies \(\Delta_s\Delta_t=O(M)\), so again \(q=\Omega(M).\)
Finally, Lemma~\ref{lem: cross edge queries} states that every edge in \(E(X,Y)\) must be queried by \(A\) with probability at least \(0.25\), and therefore
\[
T_A(G,s,t)=\Omega(q)=\Omega(M).
\]
\end{proof}

\begin{corollary}\label{cor:degree_bounded_by_bid_search}
Consider a class of input instances $\mathcal I$ for the shortest $st$-path problem such that for every $(G,s,t)\in \mathcal I$, the execution of Algorithm~\BID satisfies
\[
\sqrt{|E_s|+|E_t|}\ge \Delta,
\]
where $E_s$ and $E_t$ are the sets of edges explored by the forward and backward executions, respectively, and $\Delta$ is the maximum degree of $G$. Then every randomized algorithm $A$ that is correct with probability at least $0.9$ on every input satisfies
\[
T_A(G,s,t)
\ge
\frac{1}{64}\bigl(|E_s|+|E_t|\bigr)
\]
on every instance $(G,s,t)\in\mathcal I$. Consequently, Algorithm~\BID is instance optimal up to a constant factor on $\mathcal I$.
\end{corollary}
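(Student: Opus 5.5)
We will derive the corollary from Theorem~\ref{thm:tensor_cross_edges} combined with Lemma~\ref{lem: cross edge queries}, using a case split on how many explored edges $A$ queries often. Set $S=|E_s|+|E_t|$ and let $A$ be any $0.9$-correct randomized algorithm. Call an edge $e\in E_s\cup E_t$ \emph{heavy} if $\Pr[A\text{ queries }e]>1/4$ and \emph{light} otherwise. Then $E(G_s)$ and $E(G_t)$ are exactly the light edges of $E_s$ and $E_t$, so $M_s=e(G_s)$ and $M_t=e(G_t)$ in the notation of Theorem~\ref{thm:tensor_cross_edges}.

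\textbf{Case 1: many heavy edges.} Suppose $E_s$ has at least $|E_s|/2$ heavy edges, or $E_t$ has at least $|E_t|/2$ heavy edges. By linearity of expectation,
\[
T_A(G,s,t)\ge \tfrac14\cdot\tfrac12\min(|E_s|,|E_t|).
\]
We have $|E_s|=|E_t|\pm1$, so $\min(|E_s|,|E_t|)\ge S/3$ once finitely many trivial instances are absorbed. This gives $T_A\ge S/24\ge S/64$.

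\textbf{Case 2: few heavy edges on both sides.} Otherwise $M_s\ge |E_s|/2\ge S/6$ and $M_t\ge S/6$. The subgraphs $G_s,G_t$ are subgraphs of $G$, so $\Delta_s,\Delta_t\le\Delta\le\sqrt S$ by hypothesis. Theorem~\ref{thm:tensor_cross_edges} then gives
\[
q\ge \frac{M_sM_t}{\Delta_s\Delta_t}\ge \frac{(S/6)^2}{S}=\frac{S}{36}.
\]
Lemma~\ref{lem: cross edge queries} gives $T_A\ge q/4\ge S/144$.

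That is weaker than the stated $1/64$, so the constants need tightening. One option is to move the Case~1 threshold: with a light fraction $\lambda$, Case~1 gives roughly $(1-\lambda)S/8$ and Case~2 gives roughly $\lambda^2S/16$. A second improvement is to use $|E_s|\approx|E_t|\approx S/2$ rather than $S/3$, which is valid for large $S$. Balancing the threshold $\lambda$ between the two cases should reach a bound of the form $cS$ with $c$ near $1/64$. If the exact constant $1/64$ cannot be matched, any universal constant still suffices for instance optimality.

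The final step: \BID performs $O(S)$ queries, so it is instance optimal on $\mathcal I$. I expect the main obstacle to be the constant bookkeeping, not the structure of the argument. A secondary check is that the setup of Lemma~\ref{lem: required cross edges} (via Lemma~\ref{lem:low_query_intersection}) holds for an arbitrary correct $A$. It does, so Theorem~\ref{thm:tensor_cross_edges} applies unconditionally.
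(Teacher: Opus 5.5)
Your argument is the same as the paper's. You split explored edges into heavy ($\Pr>1/4$) and light ones. Many heavy edges already cost $A$ linearly many queries; otherwise you feed the light subgraphs into Theorem~\ref{thm:tensor_cross_edges} with $\Delta_s\Delta_t\le\Delta^2\le S$ and finish with Lemma~\ref{lem: cross edge queries}. The paper phrases this split as a contradiction: assuming $T_A<S/64$ forces fewer than $S/16$ heavy edges per side, so $M_s\ge\frac{13}{16}|E_s|$ and $M_t\ge\frac{13}{16}|E_t|$, giving $T_A\ge\frac{169}{9216}S>S/64$.

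The one thing you leave unfinished is the constant. Your threshold $1/2$ yields only $S/144$, which does not prove the stated $S/64$. The tuning you describe does close the gap, even with the crude estimate $|E_s|,|E_t|\ge S/3$. Take the light-fraction threshold to be $3/4$:
\begin{itemize}
\item In Case~1, some side has more than a quarter of its edges heavy, so $T_A>\frac14\cdot\frac14\cdot\frac S3=\frac{S}{48}$.
\item In Case~2, $M_s,M_t\ge\frac34\cdot\frac S3=\frac S4$, so $q\ge\frac{(S/4)^2}{S}=\frac{S}{16}$ and $T_A\ge\frac{S}{64}$ exactly.
\end{itemize}
Your guess of a constant ``near $1/64$'' is also pessimistic. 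Using $|E_s|\approx|E_t|\approx S/2$, the optimal balance $\lambda=\sqrt3-1$ gives $T_A\gtrsim\frac{2-\sqrt3}{8}S\approx S/30$.
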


\begin{proof}
Let
\(
S=|E_s|+|E_t|.
\)
Suppose, toward a contradiction, that for some correct algorithm $A$ we have
\(
T_A(G,s,t)<S/64.
\)
For an edge $e$, let $Q_e$ be the event that $A$ queries $e$. Define
\[
\bar E_s
=
{e\in E_s:\Pr[Q_e=1]\le 1/4},
\qquad
\bar E_t
=
{e\in E_t:\Pr[Q_e=1]\le 1/4}.
\]
Let
\(
M_s=|\bar E_s|,
M_t=|\bar E_t|.
\)

The expected number of queries made by $A$ to edges of $E_s$ is at most $T_A(G,s,t)<S/64$. Hence the number of edges in $E_s$ queried with probability greater than $1/4$ is less than
\[
\frac{S/64}{1/4}
=
\frac{S}{16}.
\]
Since Algorithm~\BID alternates between the two executions, we have $|E_s|=|E_t|\pm1$. In particular, after absorbing finitely many trivial instances into the constant,
\(
|E_s|\ge S/3
\text{ and }
|E_t|\ge S/3.
\)
Therefore
\[
M_s
\ge
|E_s|-\frac{S}{16}
\ge
|E_s|-\frac{3|E_s|}{16}
=
\frac{13}{16}|E_s|.
\]
The same argument gives
\[
M_t
\ge
\frac{13}{16}|E_t|.
\]

Now consider the setup of Theorem~\ref{thm:tensor_cross_edges} applied to the low-query edge sets $\bar E_s$ and $\bar E_t$ (i.e. $E(G_s)=\bar E_s$ and $E(G_t)=\bar E_t$). Let $q$ denote the number of cross edges obtained there. Since the graph has maximum degree $\Delta$, we have
\(
\Delta_s\Delta_t\le \Delta^2.
\)
By the assumption $\Delta^2\le S$, it follows that
\(
\Delta_s\Delta_t\le S.
\)
Thus Theorem~\ref{thm:tensor_cross_edges} gives
\[
q
\ge
\frac{M_sM_t}{\Delta_s\Delta_t}
\ge
\frac{M_sM_t}{S}.
\]
Using the lower bounds on $M_s$ and $M_t$, we obtain
\[
q
\ge
\frac{\left(\frac{13}{16}|E_s|\right)
\left(\frac{13}{16}|E_t|\right)}{S}
=
\frac{169}{256}\cdot\frac{|E_s||E_t|}{S}.
\]
Since $|E_s|=|E_t|\pm1$, we have, again up to finitely many trivial instances,
\(
|E_s||E_t|\ge S^2/9.
\)
Therefore
\[
q
\ge
\frac{169}{256}\cdot\frac{S}{9}
=
\frac{169}{2304}S.
\]

By Lemma~\ref{lem: cross edge queries}, every such cross edge forces $A$ to query with sufficiently large probability, and in particular
\(
T_A(G,s,t)\ge q/4.
\)
Hence
\[
T_A(G,s,t)
\ge
\frac{169}{9216}S.
\]
Since
\[
\frac{169}{9216}>\frac{1}{64},
\]
we get
\(
T_A(G,s,t)>S/64,
\)
contradicting the assumption
\(
T_A(G,s,t)<S/64.
\)
Thus every correct algorithm satisfies
\(
T_A(G,s,t)\ge S/64.
\)

Finally, Algorithm~\BID performs $O(|E_s|+|E_t|)$ queries, so this lower bound proves that Algorithm~\BID is instance optimal up to a constant factor on $\mathcal I$.
\end{proof}

Note that a more general version of Corollary~\ref{cor:degree_bounded_by_bid_search} can be obtained if one does not consider the maximum degree in the entire graph, rather the maximum degree in the two subgraphs induced by the explored edges and nodes of the two executions in \BID{}. 

\section{Conclusion} \label{sec:conclusion}

Motivated by the recent work of Haeupler et al. on the instance optimality of shortest-path algorithms, we revisited the instance optimality of Dijkstra-style algorithms in the standard query model. We introduced a family of graph instances together with a corresponding shortcut-based algorithm that exploits structural properties of the input while maintaining correctness through suitable lower-bound certificates. Using this construction, we showed that the implementations of unidirectional and bidirectional Dijkstra's algorithms considered in \cite{BID_DIJKSTRA} are not instance optimal in general.

We further identified a gap in the proof of instance optimality for bidirectional Dijkstra's algorithm and proposed a simple modification that restores instance optimality in the weighted setting, as well as providing a simplified proof of the main result in \cite{BID_DIJKSTRA}. In addition, we presented a simplified proof of the lower bound of Theorem 6.2 from \cite{BID_DIJKSTRA}, highlighting the role played by lower bounds on shortest-path length in establishing instance-optimal guarantees.

Finally, we considered the open problem of instance optimality in simple graphs. First, we established instance optimality for problem instances where the number of nodes is at least $1/16$ the query complexity of \BID{}. Second we provided a more general proof strategy which allowed us to conclude that \BID{} is instances optimal for graph where the maximal degree is at most the square root of the number of explored edges. Determining whether instance optimality can be established in arbitrary simple graphs remains an interesting direction for future work.

\bibliographystyle{plain}
\bibliography{references}

\end{document}